\DeclareMathOperator{\kRep}{\mathit{k}\mathbf{Rep}}
\DeclareMathOperator{\MkRep}{\mathbf{M}\mathit{k}\mathbf{Rep}}
\newcommand{\numberofocc}[2]{\#_{#1} (#2)}
\newcommand{\intcompositons}[2]{\Delta_{#1, #2}}
\newcommand{\prevpt}[3]{Prev_{#1}(#2, #3)}
\newcommand{\nextpt}[3]{Next_{#1}(#2, #3)}
\newcommand{\Z}{\mathbb{Z}}
\newcommand{\krep}[1]{\kRep(#1)}
\newcommand{\maxkrep}[1]{\MkRep(#1)}
\newcommand{\intv}[2]{[#1,\;#2]}
\newcommand{\rointv}[2]{[#1,\;#2)}
\newcommand{\lointv}[2]{(#1,\;#2]}
\newcommand{\ointv}[2]{(#1,\;#2)}
\newcommand{\sigmastart}{\(\sigma\)-start}
\newcommand{\algoextendkrep}{\textsc{Extend\(k\)RepSubseq}{\((S, X, \sigma)\)}}
\newcommand{\algoextendcs}{\textsc{M\(k\)CS}}
\newcommand{\algoenumsigstrt}{\textsc{EnumPotential-$\sigma$-Start}}
\newcommand{\sigmasplitpoint}{\(\sigma\)-split point}
\newcommand{\eg}{e.g.\@}
\newcommand{\qed}{\hfill\hbox{\rlap{$\sqcap$}$\sqcup$}}
\newenvironment{proof}{\noindent \emph{Proof.\,}}{\qed}
\newtheorem{example}{Example}
\newtheorem{theorem}{Theorem}[section]
\newtheorem{lemma}{Lemma}[section]
\newtheorem{corollary}{Corollary}[section]
\newtheorem{definition}{Definition}[section]
\newtheorem{observation}{Observation}[section]
\title{Computing Maximal Repeating Subsequences in a String}
\author{Mingyang Gong \footnote{Gianforte School of Computing, Montana State University, Bozeman, MT 59717, USA. Email: {\tt mingyang.gong@montana.edu}.}
\and
Adiesha Liyanage \footnote{Gianforte School of Computing, Montana State University, Bozeman, MT 59717, USA. Email: {\tt a.liyanaralalage@montana.edu}.}
\and
Braeden Sopp \footnote{Gianforte School of Computing, Montana State University, Bozeman, MT 59717, USA. Email: {\tt braeden.sopp@gmail.com}.}
\and
Binhai Zhu \footnote{Gianforte School of Computing, Montana State University, Bozeman, MT 59717, USA. Email: {\tt bhz@montana.edu}.}
}
\date{}
\begin{document}
\maketitle

\begin{abstract}
In this paper we initiate the study of computing a maximal
(not necessarily maximum) repeating pattern in a single input string,
where the corresponding problems have been studied (e.g., a
maximal common subsequence) only in two or more input strings by Hirota and Sakai starting 2019. Given an input string $S$ of length $n$, we can compute
a maximal square subsequence of $S$ in $O(n\log n)$
time, greatly improving the $O(n^2)$ bound for computing the
longest square subsequence of $S$. For a maximal $k$-repeating subsequence, our bound is $O(f(k)n\log n)$, where \(f(k)\) is a computable function such that $f(k) < k\cdot 4^k$. This greatly improves the $O(n^{2k-1})$ bound for computing a longest $k$-repeating subsequence of $S$, for $k\geq 3$. Both results hold for the constrained case, i.e., when the solution must contain a subsequence $X$ of $S$, though with higher running times.
\end{abstract}

\section{Introduction}

Computing (non-contiguous)  patterns in two or more input strings is a classical problem which has found many applications. The textbook example is the longest common subsequence (LCS) of two
strings of a total length of $n$, which was solved by Wagner and Fischer in $O(n^2)$ time and space \cite{DBLP:journals/jacm/WagnerF74} and then in $O(n^2)$ time but $O(n)$ space by Hirschberg \cite{DBLP:journals/cacm/Hirschberg75}. It has been researched and applied in various applications, sometimes with additional
constraint, e.g., the LCS must also contain a pattern $X$, which is a subsequence appearing in both input strings $A$ and $B$. If $A$, $B$ and $X$ are all of length $O(n)$, Tsai gave an $O(n^5)$ time algorithm \cite{DBLP:journals/ipl/Tsai03}. The bound was improved to $O(n^3)$ by Chin et al. \cite{DBLP:journals/ipl/ChinSFHK04} and also by Chen and Chao \cite{DBLP:journals/jco/ChenC11}, the latter additionally considered the version where the LCS of $A$ and $B$ must not include $X$ as a subsequence. 

A few years ago, based on the observed phenomena in applications that sometimes the longest common subsequence might not capture the truely needed or observed pattern in many applications, Sakai
studied the {\em maximal common subsequence (MCS)} of two strings \cite{Sakai18,Sakai19}. For two strings $A$ and $B$ with a total length of $n$, Sakai can compute an MCS of $A$ and $B$ in
$O(n\log n)$ time \footnote{The claimed time bounds in \cite{Sakai18,Sakai19} are incorrect as the construction cost of the used data structure was not counted \cite{DBLP:journals/tcs/HirotaS25}}. Later, given $m$ strings with a total length of $n$, Hirota and Sakai presented an algorithm which can compute an MCS in $O(mn\log n)$ time \cite{HirotaSakai23}. Both of these algorithms work for the
case when the solution must contain a pattern (or subsequence) of $A$ and $B$.

In many biological applications, sometimes useful non-contiguous patterns exist in a single input string. For example, plants go through up to three whole genome duplications; hence there is a general
repetition pattern (even after some subsequent local mutations) \cite{zheng2009gene}. Applications like this motivate the {\em Longest Square Subsequence (LSS)} problem, i.e., a longest pattern like $AA$ where both $AA$ and $A$ itself are subsequences of an input sequence $S$ of length $n$, for which Kosowski gave an $O(n^2)$ time algorithm in 2004 \cite{Kosowski04}. For the longest $k$-repeating subsequence, i.e., a longest pattern like $A^k$ where both
$A^k$ and $A$ are subsequence of $S$, there is a trivial
algorithm: just enumerate $k-1$ cuts in $S$ and then compute the
longest common subsequence of the resulting $k$ strings. This results in an $O(n^{2k-1})$ time algorithm; unfortunately, this is the best known algorithm for $k\geq 3$. For $k=3$, Wang and Zhu gave a parameterized $O(q^2n^3)$ time algorithm, where $q$ is the minimum number of letters deleted in $S$ to have a solution \cite{DBLP:conf/spire/WangZ23}. However, when $q=O(n)$ the worst-case running time of this algorithm remains to be $O(n^{2}\times n^{3})=O(n^5)$.

Recently, motivated by recovering patterns from tandem duplications, Lai et al. studied the {\em Longest Letter-duplicated Subsequence (LLDS)} and related problems \cite{DBLP:journals/acta/LaiLZZ24}. Here a letter-duplicated subsequence (LDS) is a subsequence of $S$ in the form
$\prod_{i=1}^{k} x_i^{d_i}$, where $x_i$ is a letter in $S$ and $d_i$ is an integer at least 2. They showed that LLDS can be solved in $O(n)$ time. In fact, this problem is closely related to the {\em Maximum Run} problem from computational
genomics where each letter $x_i$ can only appear in a run (i.e., consecutively) at most once, and $d_i\geq 1$ \cite{DBLP:conf/wabi/SchrinnerGWSSK20,DBLP:conf/cpm/AsahiroEG0LMOT23}. Very recently, Lafond et al. studied the more general
problem called {\em Longest Subsequence-duplicated Subsequence (LSDS)} and related problems, where a subsequence-duplicated subsequence (SDS) is a subsequence of $S$ in the form $\prod_{i=1}^{k} X_i^{d_i}$, where $X_i$ is a subsequence in $S$ and $d_i$ is an integer at least 2 \cite{DBLP:journals/iandc/LafondLLZ25}. They showed that LSDS can be solved in $O(n^6)$ time, where the fundamental subroutines are to compute the longest square and cubic sequences of (a substring of) $S$.

In this paper, we follow the footsteps of Hirota and Sakai to initiate the research for computing various maximal repeating subsequences in a single string $S$ of length $n$. 
We summarize our results as follows:
\begin{enumerate}
\item A maximal square subsequence of $S$ can be computed in $O(n\log n)$ time.
\item A maximal \(k\)-repeating subsequence of $S$ can be computed
in $O(f(k)\cdot n\log n)$ time for $k\geq 3$ and $f(k)$ is a computable function on $k$, i.e., when $k=O(1)$ the running time is $O(n\log n)$.

\end{enumerate}

The paper is organized as follows. In Section 2 we give necessary definitions. In Section 3 we present the result
to compute a maximal 
 square subsequence of $S$. In Section 4 we
present the result to compute a maximal \(k\)-repeating subsequence of $S$.
In Section 5 we conclude the paper by giving some open problems.

\section{Preliminaries}

\subsection{Basics}
Let $\Sigma$ be an alphabet and $S$ be a given sequence over the alphabet $\Sigma$.
We let $n = |S|$ denote the length of $S$ and $m = |\Sigma|$ denote the size of $\Sigma$, respectively.
For every $1 \le i \le |S|$, we denote by $S[i]$ the $i$-th character in $S$.
A sequence $X$ is a {\em subsequence} of $S$ if there exist $1 \le i_1 < i_2 < \ldots < i_{|X|} \le |S|$ such that $X = S[i_1]S[i_2]\ldots S[i_{|X|}]$; in this case we also abuse the notation by writing $X\subseteq S$.
We denote the empty string as \(\epsilon\).
Given two positive integers $1 \le i < j \le |S|$, we denote by $S[i,j]$ the substring of $S$ that begins with $S[i]$ and ends at $S[j]$. Note that we define \(S[j,\;i] = \epsilon\) when $j>i$.
Furthermore,
$S[1,j]$ is called a {\em prefix} of $S$, while $S[i,n]$ is called a {\em suffix} of $S$.
Given a character $\sigma \in \Sigma$, we use $\sigma^\ell$ to denote the concatenation of $\ell$ copies of the character $\sigma$.
We use open interval notation to indicate exclusive index substrings; i.e., 
\( S\rointv{i}{j} = S\intv{i}{j-1} \), 
\( S\lointv{i}{j} = S\intv{i+1}{j} \), and 
\( S\ointv{i}{j} = S\intv{i+1}{j-1} \).
Moreover, we define \(\numberofocc{S}{\sigma}\) to be the number of occurrences of \(\sigma\) in \(S\).

Given two sequences $S$ and $T$, a maximal common subsequence
$C$ of $S$ and $T$ is one which is not a subsequence of any other common subsequence $C'$ of $S$ and $T$. (Note that the longest common sequence of $S$ and $T$ is simply the longest maximal common subsequence.) Sakai's algorithm is based on the
following observation (which was not explicitly stated in \cite{Sakai18}).

\begin{observation}
\label{obser00}
Let $C$ be a common subsequence of two input strings $S$ and $T$. Either $C$ is maximal, or $C$ can be augmented into a maximal common subsequence $C'$ of $S$ and $T$ by inserting letters.
\end{observation}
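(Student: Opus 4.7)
The plan is to unpack the definition of maximality and then iterate until a length bound forces the chain to stop. First, if $C$ is already maximal as a common subsequence of $S$ and $T$, we may take $C' = C$ and are done. Otherwise, by the very definition of a maximal common subsequence, there must exist some common subsequence $C_1$ of $S$ and $T$ with $C \subsetneq C_1$. This says precisely that $C_1$ is obtained from $C$ by inserting one or more letters at appropriate positions: if $\iota$ is the increasing index map witnessing $C \subseteq C_1$, then the characters of $C_1$ at indices outside the image of $\iota$ are exactly the inserted letters.

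Next I would iterate this step. If $C_1$ is maximal, set $C' = C_1$ and stop; otherwise apply the same argument to $C_1$ to obtain a common subsequence $C_2$ with $C_1 \subsetneq C_2$, and continue, producing a strictly ascending chain
\[
C \subsetneq C_1 \subsetneq C_2 \subsetneq \cdots
\]
of common subsequences of $S$ and $T$. Transitivity of the subsequence relation (obtained by composing the underlying increasing index maps) guarantees that each $C_i$ still contains $C$ as a subsequence, so each $C_i$ is obtained from $C$ by a sequence of letter insertions.

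Finally, since every common subsequence of $S$ and $T$ has length at most $\min(|S|, |T|)$ and the chain above strictly increases in length at each step, the process must terminate after finitely many steps at some $C_k$ that admits no strict enlargement among common subsequences. By construction, $C_k$ is maximal and still an augmentation of $C$ by letter insertions, so setting $C' = C_k$ finishes the argument. The only mildly subtle point is the transitivity remark used to justify that successive insertions compose to insertions into the original $C$; everything else is immediate from the definitions together with the finiteness of the length range, so I do not anticipate a real technical obstacle.
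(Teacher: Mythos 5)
Your argument is correct and is essentially the standard justification the paper leaves implicit (the observation is stated without proof): non-maximality yields a strictly longer common supersequence, and since lengths are bounded by $\min(|S|,|T|)$ the ascending chain terminates at a maximal $C'$ containing $C$ by transitivity of the subsequence relation. No gaps.
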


Note that this property does not hold for the longest common subsequences, i.e., one cannot always augment an arbitrary common subsequence into a longest one. Sakai's algorithm \cite{Sakai18} basically greedily searches a common subsequence of $S$ and $T$ from right to left and uses additional maximality property to check if it is maximal; if not, then backtrack by adding more letters until the maximality property holds (i.e., a maximal common subsequence is computed).

\subsection{Basics on One String}

A {\em square subsequence} of $S$ is a subsequence $S$ and can be represented in the form $XX$.
A square subsequence $B$ of $S$ is {\em maximal} if there is no other square subsequence $C$ of $S$ that properly contains $B$.
As a warm-up, we first study the {\em maximal square subsequence} problem that seeks a maximal square subsequence of $S$.
One sees that if a character only appears once, then we can directly remove it from the sequence since it will never appear in a maximal square subsequence.
Therefore, we can assume that every character appears at least twice in $S$.

\begin{definition}
\label{def01}
Suppose that $X = YY$ is a square subsequence of $S$.
We say $X$ is {\em right-extendable} ({\em left-extendable}, respectively) 
if there is a character $y$ such that $YyYy$ ($yYyY$, respectively) is a subsequence of $S$.

Similarly, $X = YY$ is {\em inner-extendable} if there is a character $y$ such that $Y_1yY_2 Y_1yY_2$ is subsequence of $S$ where $Y = Y_1Y_2$ and neither of $Y_1$ and $Y_2$ is $\epsilon$.
\end{definition}

One sees that $X$ is a maximal square subsequence of $S$ if and only if $X$ is not right-, left- and inner-extendable.
In fact, the following observation is crucial for our algorithm for the maximal square subsequence problem (which could be thought as an extension of Observation~\ref{obser00}).

\begin{observation}
\label{obser01}
Let $\sigma$ be a letter in $S$ which appears $\ell$ times. Then there must be a maximal square subsequence containing
$2\lfloor \ell/2\rfloor$ copies of $\sigma$.
\end{observation}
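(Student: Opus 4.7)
The plan is to exhibit an explicit square subsequence that already uses $2\lfloor \ell/2\rfloor$ copies of $\sigma$, and then to argue that it can be augmented (character by character) into a maximal square subsequence without losing any of those copies of $\sigma$ along the way.

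First, I would identify the $\ell$ positions in $S$ where $\sigma$ occurs, say $i_1 < i_2 < \cdots < i_\ell$, and let $h = \lfloor \ell/2 \rfloor$. Setting $Y_0 = \sigma^{h}$, the string $X_0 = Y_0 Y_0 = \sigma^{2h}$ is a square subsequence of $S$, realized by the positions $i_1, \ldots, i_h$ for the first half and $i_{h+1}, \ldots, i_{2h}$ for the second half. So $X_0$ is a square subsequence containing exactly $2\lfloor \ell/2 \rfloor$ copies of $\sigma$.

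Next, I would iterate the extension procedure suggested by Definition~\ref{def01}. If the current square subsequence $X_t = Y_t Y_t$ is not maximal, then by the characterization given immediately after Definition~\ref{def01}, it is right-, left-, or inner-extendable; in each case we obtain a strictly longer square subsequence $X_{t+1}$ of $S$ of the form $Y_{t+1} Y_{t+1}$ with $X_t$ as a (proper) subsequence of $X_{t+1}$. The crucial point is that every such extension only \emph{inserts} characters into $X_t$, so every occurrence of $\sigma$ already used in $X_t$ is still present in $X_{t+1}$. In particular, $X_{t+1}$ contains at least $2\lfloor \ell/2 \rfloor$ copies of $\sigma$. Since $|X_{t+1}| > |X_t|$ and $|X_t| \le n$, this process terminates in finitely many steps at a square subsequence $X^\ast$ that is not right-, left-, or inner-extendable, hence a maximal square subsequence of $S$.

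The only mild subtlety, and what I would present carefully, is that the extension step truly preserves all copies of $\sigma$ from $X_0$. This is not automatic from the mere existence of some $X_{t+1}$ properly containing $X_t$ as a square; one must use the constructive form $Y_t y Y_t y$, $y Y_t y Y_t$, or $Y_{t,1} y Y_{t,2} Y_{t,1} y Y_{t,2}$, all of which trivially have $X_t$ as a subsequence and thus inherit its $\sigma$-occurrences. Given that observation, no genuine obstacle remains: the chain $X_0 \subsetneq X_1 \subsetneq \cdots \subsetneq X^\ast$ yields a maximal square subsequence containing $2\lfloor \ell/2 \rfloor$ copies of $\sigma$, as claimed. (Note that it must contain \emph{exactly} this many, since $S$ itself has only $\ell$ copies of $\sigma$ and a square must use an even number.)
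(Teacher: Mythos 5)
Your proof is correct and matches the argument the paper leaves implicit: Observation~\ref{obser01} is stated without proof, being the single-string analogue of Observation~\ref{obser00}, and your iterated-extension chain starting from $\sigma^{2\lfloor \ell/2\rfloor}$ is exactly the intended justification (and is what Algorithms~\ref{Approx1} and~\ref{Approx2} then realize constructively and efficiently). The only point worth flagging is that your step ``non-maximal implies right-, left-, or inner-extendable'' leans on the paper's unproved ``one sees that'' characterization, which itself needs Lemma~\ref{lemma01} to pass from $YY \subsetneq WW$ to $Y \subsetneq W$ before a single-character insertion can be extracted --- but that is the paper's debt, not yours.
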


\section{Computing a Maximal Square Subsequence}

We present an $O(n\log n)$ time algorithm to compute a maximal square sequence of $S$ (with length $n$). We first arbitrarily choose a character $\sigma \in \Sigma$ and afterwards, record the positions of each character $\sigma$ in $S$ in $O(n)$ time by scanning the sequence $S$ once.
%
The sequence of positions of $\sigma$ in $S$ is denoted as $\langle i_1, i_2, \ldots, i_\ell \rangle$ where $\ell$ is the number of occurrences of $\sigma$ in $S$.
For the ease of presentation, let $e = \lfloor \frac \ell 2 \rfloor$.

Our algorithm starts with the square subsequence $X_0 = \sigma^{2e}$
and firstly converts it into a square subsequence $X_1$ that contains $X_0$; and, the sequence $X_1$ is not right-extendable and inner-extendable.
Finally, we enumerate all possible ``leftmost anchors" of $X_1$ (see \cref{def02}) and; compute a square subsequence $X_2$ that contains $X_1$ and is not left-extendable.
Consequently, the final square subsequence produced $X_2$ is maximal. Figure~\ref{fig1} gives a flowchart of the algorithm for computing a maximal square subsequence of $S$.

\begin{figure}[htbp]
\begin{equation*}
\quad
X_0 =\sigma^{e}\cdot \sigma^{e} \quad\longrightarrow \quad X_{1} = YY \quad\longrightarrow \quad X_{2}=ZZ
\end{equation*}
\caption{A flowchart of the algorithm.}
\label{fig1}
\end{figure}

We next introduce the steps to form $X_1$ from $X_0$.
For the ease of presentation, the notation $MCS(S_1, S_2, S_3)$ is used to denote the maximal common subsequence of $S_1$, $S_2$ that contains $S_3$,
which is outputted by Sakai's algorithm.
We first compute $Y =MCS(S[i_1, i_{e+1}), S[i_{e+1}, n], \sigma^e)$.
Note that $Y$ must be a subsequence of $S[i_1, i_{e+2}]$ since $i_{e+2} > i_{e+1}$ and $Y$ is a subsequence of $S[i_1, i_{e+1}]$.
If $\ell$ is odd and $Y$ is a subsequence of $S[i_{e+2}, n]$, then we update $Y$ by $MCS(S[i_1, i_{e+2}), S[i_{e+2}, n], Y)$.
The final produced $X_1 = YY$ is the concatenation of two copies of $Y$.
We conclude the process of computing $X_1$ in \cref{Approx1} and use an example to illustrate our algorithm.

\begin{algorithm}
\caption{Algorithm for computing $X_1$}
\label{Approx1}
\begin{algorithmic}[1]
\State Input: A sequence $S$, a character $\sigma$ which repeats $\ell$ times, and $e=\lfloor \ell/2\rfloor$.

\State Output: A square subsequence $X_1$ that contains $\sigma^{2e}$ and is not right- and inner-extendable.

\State $Y =MCS(S[i_1, i_{e+1}), S[i_{e+1}, n], \sigma^e)$.

\If {$\ell$ is odd and $Y$ is a subsequence of $S[i_{e+2}, n]$}

\State $Y =MCS(S[i_1, i_{e+2}), S[i_{e+2}, n], Y)$.

\EndIf

\State Return $X_1 = YY$.
\end{algorithmic}
\end{algorithm}

\begin{example}
\label{ex01}
We assume that the input sequence $S = abcabcaccabcac$ and the character $a$ is picked up by the algorithm.
The character $a$ appears $\ell = 5$ times and the sequence of positions of $a$ is $\langle 1, 4, 7, 10, 13 \rangle$.
Therefore, $e = \lfloor \frac 52 \rfloor = 2$, $i_{e+1} = 7$ and $i_{e+2} = 10$.

Since $\ell$ is odd, then we first compute a $MCS(abcabc, accabcac, aa)$ and a possible output $Y$ can be $abcac$.
Then we run $MCS(abcabcacc, abcac, abcac)$ and the final output is $X_1 = YY = abcac \cdot abcac$.
\end{example}

\begin{lemma}
\label{lemma01}
Given any sequence $Z$, $Y$ is a subsequence of $Z$ if and only if $YY$ is a subsequence of $ZZ$.
\end{lemma}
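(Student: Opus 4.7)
The plan is to prove the two directions separately, with the forward direction being essentially trivial and the backward direction requiring a short case analysis based on a ``split point'' in the embedding.

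For the forward direction, I would assume $Y\subseteq Z$ via an embedding $1\le i_1<i_2<\cdots<i_{|Y|}\le |Z|$ and exhibit the explicit embedding of $YY$ into $ZZ$ given by the indices $i_1,\ldots,i_{|Y|},\, i_1+|Z|,\ldots,i_{|Y|}+|Z|$. These are strictly increasing, lie within $[1,\,2|Z|]$, and pick out exactly the characters of $YY$, so $YY\subseteq ZZ$.

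For the backward (and more interesting) direction, I would fix an embedding of $YY$ in $ZZ$ given by positions $j_1<j_2<\cdots<j_{2|Y|}$ in $[1,\,2|Z|]$, where positions $1,\ldots,|Z|$ correspond to the first copy of $Z$ and positions $|Z|+1,\ldots,2|Z|$ correspond to the second. Write $k=|Y|$ and look at the two ``middle'' indices $j_k$ and $j_{k+1}$, which separate the embedding of the first $Y$ from that of the second $Y$. Since $j_k<j_{k+1}$, at least one of the following holds: either $j_k\le |Z|$, or $j_{k+1}\ge |Z|+1$.

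In the first case all of $j_1,\ldots,j_k$ lie in $[1,\,|Z|]$, so the first copy of $Y$ is embedded into the first copy of $Z$, giving $Y\subseteq Z$. In the second case all of $j_{k+1},\ldots,j_{2k}$ lie in $[|Z|+1,\,2|Z|]$, so after subtracting $|Z|$ from each index we obtain an embedding of the second copy of $Y$ into $Z$, again giving $Y\subseteq Z$. Since one of these cases must occur, $Y\subseteq Z$, completing the proof. The only potential subtlety is being careful about the boundary between the two copies of $Z$ and noting that the case $j_k>|Z|$ and $j_{k+1}\le |Z|$ is ruled out by $j_k<j_{k+1}$; this is the ``main obstacle'' but it is really just a bookkeeping issue.
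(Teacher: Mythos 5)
Your proof is correct. The forward direction is the standard index-shift embedding, and the backward direction is handled by a clean dichotomy: in any embedding $j_1<\dots<j_{2|Y|}$ of $YY$ into $ZZ$, either the first $|Y|$ indices all lie in $[1,|Z|]$ or the last $|Y|$ indices all lie in $[|Z|+1,2|Z|]$, since the bad case $j_{|Y|}>|Z|$ and $j_{|Y|+1}\le|Z|$ contradicts monotonicity; either way one copy of $Y$ embeds into one copy of $Z$.

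The core idea is the same as the paper's (one of the two copies of $Y$ must fit entirely on one side of the boundary between the two copies of $Z$), but your execution differs in a way worth noting. The paper constructs the \emph{greedy leftmost} embedding of the first $Y$ into $ZZ$ and splits on whether it stays inside the first $Z$; in the spill-over case it asserts that ``$Y$ is a supersequence of the first $Z$,'' which is not literally what follows (what follows is that the second copy of $Y$ is then forced into a suffix of the second $Z$), so the paper's intermediate claim is stated loosely even though its conclusion is right. Your version works with an \emph{arbitrary} embedding and pigeonholes on where the middle two indices $j_{|Y|},j_{|Y|+1}$ fall relative to $|Z|$; this avoids the greedy construction entirely and sidesteps the loosely stated step. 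The only cosmetic gap is the degenerate case $Y=\epsilon$, where the indices $j_{|Y|},j_{|Y|+1}$ do not exist, but the claim is vacuously true there.
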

\begin{proof}
Clearly, if $Y$ is a subsequence of $Z$, then $YY$ is a subsequence of $ZZ$.
On the other hand, suppose that $YY$ is a subsequence of $ZZ$ and then the first $Y$ must be a subsequence of $ZZ$.
We construct a $Y$ from $ZZ$ by greedily concatenating the leftmost characters in $ZZ$.
Therefore, $Y$ is either a subsequence of the first $Z$ or $Y$ is a supersequence of the first $Z$.
In the latter case, the second $Y$ is a subsequence of a suffix of the second $Z$ --- then $Y$ must be a subsequence of $Z$. Hence the lemma is proved.
\qed
\end{proof}

\begin{lemma}
\label{lemma02}
The first character of $Y$ is exactly $\sigma$, i.e., $Y[1] = \sigma$ where $X_1 = YY$.
\end{lemma}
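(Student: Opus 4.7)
The plan is to run a pigeonhole-style argument on the number of occurrences of $\sigma$ in $Y$, handling separately the two branches of \cref{Approx1}.

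First I would handle the branch in which the \texttt{if}-condition fails, so $Y = MCS(A, B, \sigma^e)$ is the final output with $A = S[i_1, i_{e+1})$ and $B = S[i_{e+1}, n]$. The key observation is that $A$ contains exactly $e$ occurrences of $\sigma$, namely at positions $i_1, i_2, \ldots, i_e$. Since $Y$ is constrained to contain $\sigma^e$, it has at least $e$ copies of $\sigma$, and since $Y$ is a subsequence of $A$ it has at most $e$; hence exactly $e$. In any embedding of $Y$ into $A$, the $e$ $\sigma$'s of $Y$ must therefore be matched to the $e$ $\sigma$'s of $A$ in order, so in particular the first $\sigma$ of $Y$ is forced to match $A[1]$. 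But there is no position in $A$ strictly before position $1$, so no character of $Y$ can precede its first $\sigma$. This gives $Y[1] = \sigma$.

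For the remaining branch, in which $\ell$ is odd and $Y$ is overwritten by $MCS(S[i_1, i_{e+2}), S[i_{e+2}, n], Y_{\mathrm{old}})$, I would rerun the same argument but pivot on the \emph{right} input $B' = S[i_{e+2}, n]$. With $\ell = 2e + 1$, $B'$ contains the $\sigma$-positions $i_{e+2}, \ldots, i_{2e+1}$, i.e.\ exactly $e$ occurrences of $\sigma$. The updated $Y$ contains $Y_{\mathrm{old}}$ and hence $\sigma^e$, yet as a subsequence of $B'$ has at most $e$ copies of $\sigma$; so again exactly $e$. The same order-preservation argument then forces the first $\sigma$ of $Y$ to match $B'[1]$, giving $Y[1] = \sigma$.

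I do not expect a genuine obstacle here; the only point that needs care is choosing, in each branch, the input string whose first character is $\sigma$ \emph{and} whose $\sigma$-count equals $e$. In the first branch this is the left argument $A$, while in the odd-$\ell$ branch the left argument $S[i_1, i_{e+2})$ has $e+1$ copies of $\sigma$ and would be too loose, so one must pivot to the right argument $B'$. Once the correct string is fixed, equality of $\sigma$-counts and the preservation of order in any subsequence embedding yield $Y[1] = \sigma$ immediately.
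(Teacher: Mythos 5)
Your proposal is correct and follows essentially the same route as the paper: both arguments pigeonhole the exactly $e$ occurrences of $\sigma$ in $Y$ against the exactly $e$ occurrences in the appropriate input string of the $MCS$ call (the left argument $S[i_1,i_{e+1})$ in the first branch, the right argument $S[i_{e+2},n]$ in the odd-$\ell$ overwrite branch), and conclude from the fact that that string begins with $\sigma$. Your organization by algorithm branch rather than by parity of $\ell$ is only a cosmetic difference, and your explicit remark about why one must pivot to the right argument in the second branch is exactly the point the paper relies on implicitly.
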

\begin{proof}
We prove the lemma by discussing the parity of $\ell$.
If $\ell$ is even, then $Y$ contains $\sigma^e$ and is the maximal common subsequence of $S[i_1, i_{e+1})$ and $S[i_{e+1}, n]$.
In this case, every $\sigma$ in $S[i_1, i_{e+1}-1]$ must appear in $Y$ and the lemma is proved since the first character of $S[i_1, i_{e+1})$ is exactly $\sigma$.

We next assume $\ell$ is odd.
If $Y$ is a maximal common subsequence of $S[i_1, i_{e+1})$ and $S[i_{e+1}, n]$, then similarly to an even $\ell$, we are done.
Otherwise, $Y$ is a maximal common subsequence of $S[i_1, i_{e+2})$ and $S[i_{e+2}, n]$ and; $Y$ contains $\sigma^e$.
In this case, every $\sigma$ in $S[i_{e+2}, n]$ must appear in $Y$.
The lemma follows since the first character of $S[i_{e+2}, n]$ is exactly $\sigma$.
\qed
\end{proof}

\begin{lemma}
\label{lemma03}
A square subsequence that contains $X_1$ must not be right-extendable.
\end{lemma}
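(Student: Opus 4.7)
\emph{Proof proposal.}
The plan is first to reduce to the special case that $X_1$ itself is not right-extendable. If $X' = Y'Y'$ is any square subsequence of $S$ containing $X_1 = YY$, then by \cref{lemma01} we have $Y \subseteq Y'$, so a witness $Y'yY'y \subseteq S$ of right-extendability for $X'$ immediately yields $YyYy \subseteq Y'yY'y \subseteq S$, i.e., $X_1$ is also right-extendable. Hence it suffices to derive a contradiction from $YyYy \subseteq S$.

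A $\sigma$-count argument first forces $y \ne \sigma$: in either branch of the algorithm, $Y$ embeds into a prefix (resp.\ suffix) of $S$ containing exactly $e$ copies of $\sigma$, so $Y$ itself has exactly $e$ copies; hence $YyYy$ would contain $2e+2 > \ell$ copies of $\sigma$ if $y = \sigma$. Fix any embedding of $YyYy$ in $S$ and, using \cref{lemma02}, observe that both copies of $Y$ start at a $\sigma$-position. In the even case $\ell = 2e$, the ordering constraints force the first $Y$ to use $\sigma$-positions $\{i_1,\ldots,i_e\}$, the second $Y$ to use $\{i_{e+1},\ldots,i_{2e}\}$, and the first $y$ to occupy a position strictly between $i_e$ and $i_{e+1}$. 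Thus the first $Yy$ embeds in $S[i_1,i_{e+1})$ and the second $Yy$ embeds in $S[i_{e+1},n]$, so $Yy$ is a common subsequence of $S[i_1,i_{e+1})$ and $S[i_{e+1},n]$ that still contains $\sigma^e$ and properly contains $Y$; this contradicts the maximality of $Y = MCS(S[i_1, i_{e+1}), S[i_{e+1}, n], \sigma^e)$.

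For the odd case $\ell = 2e+1$, exactly one $\sigma$-position $i_k$ is unused by the embedding, and a similar positional analysis splits into (A) $k \le e+1$, where the first $Yy$ lies in $S[i_1, i_{e+2})$ and the second in $S[i_{e+2}, n]$, and (B) $k \ge e+2$, where the first $Yy$ lies in $S[i_1, i_{e+1})$ and the second in $S[i_{e+1}, n]$. Writing $Y_0 = MCS(S[i_1, i_{e+1}), S[i_{e+1}, n], \sigma^e)$ for the string produced before the possible update, we close these cases as follows. If the update step executed (so $Y$ is an MCS on $(S[i_1, i_{e+2}), S[i_{e+2}, n])$ that contains $Y_0$), then (A) directly contradicts the maximality of $Y$, while (B) produces a common subsequence of $S[i_1, i_{e+1})$ and $S[i_{e+1}, n]$ properly containing $Y \supseteq Y_0$, contradicting the maximality of $Y_0$. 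If the update step did not execute, then $Y = Y_0$ with $Y_0$ not a subsequence of $S[i_{e+2}, n]$; in (A) we would get $Y \subseteq Yy \subseteq S[i_{e+2}, n]$, a direct contradiction of the branch condition, while (B) again contradicts the maximality of $Y_0$.

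The main obstacle will be the odd-case bookkeeping: the embedding's choice of which $\sigma$-position to skip and the algorithm's choice of whether to update $Y$ are independent, and each combination must be closed against the maximality of the appropriate MCS ($Y$ itself or the pre-update $Y_0$).
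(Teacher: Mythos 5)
Your proposal is correct and follows essentially the same route as the paper: reduce via \cref{lemma01} to showing $YyYy \not\subseteq S$, then locate the first $\sigma$ of the second $Yy$ at $i_{e+1}$ (even case) or at $i_{e+1}$/$i_{e+2}$ (odd case) and contradict the maximality of the appropriate MCS ($Y$ or the pre-update $Y_0$, which the paper calls $Y'$). Your extra observations (that $y\neq\sigma$ and the explicit bookkeeping of the skipped $\sigma$-position $i_k$) only make explicit what the paper leaves implicit.
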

\begin{proof}
We assume that a square subsequence $ZZ$ contains $X_1 = YY$.
Suppose that $ZZ$ is right-extendable and thus there exists a character $y$ 
such that $ZyZy$ is a square subsequence of $S$.
By \cref{lemma01} and $\sigma^{2e} \subseteq YY \subseteq ZZ$, $\sigma^e$ is a subsequence of $Y$ and $Y$ is a subsequence of $Z$.
Therefore, $YyYy$ is a square subsequence of $S$, that is, $YY$ is also right-extendable.
We next consider the following two cases.

Case 1: $\ell$ is even. 
By \cref{lemma02}, $Y[1]=\sigma$.
Since $Y$ contains $\sigma^e$, the character $Y[1]=\sigma$ in the second $Yy$ of $YyYy$ must appear in the position $i_{e+1}$.
In other words, the first $Yy$ of $YyYy$ is a subsequence of $S[i_1, i_{e+1})$ and the second $Yy$ is a subsequence of $S[i_{e+1}, n]$, respectively.
Therefore, $Yy$ is a common subsequence of $S[i_1, i_{e+1})$ and $S[i_{e+1}, n]$,
which contradicts the maximality of $Y$.

Case 2: $\ell$ is odd.
By \cref{lemma02} and $Y$ contains $\sigma^e$, the character $Y[1]=\sigma$ in the second $Yy$ of $YyYy$ must appear in the position $i_{e+1}$ or $i_{e+2}$.
In other words, $Yy$ must be a common subsequence of $S[i_1, i_{e+2})$ and $S[i_{e+2}, n]$ or;
a common subsequence of $S[i_1, i_{e+1})$ and $S[i_{e+1}, n]$.

Case 2.1: $Yy$ is a common subsequence of $S[i_1, i_{e+2})$ and $S[i_{e+2}, n]$ and so is $Y$.
Therefore, $Y$ is computed in line 5 of \cref{Approx1}. 
But because of the existence of $Yy$, $Y$ is not a maximal common subsequence of $S[i_1, i_{e+2})$ and $S[i_{e+2}, n]$, a contradiction.

Case 2.2: $Yy$ is a common subsequence of $S[i_1, i_{e+1})$ and $S[i_{e+1}, n]$.
Let $Y'$ be the produced sequence in line 3 of \cref{Approx1}.
If $Y'$ is not a subsequence of $S[i_{e+2}, n]$, then $Y = Y'$.
Otherwise, $Y$ is a supersequence of $Y'$.
Therefore, in both two cases, $Y'$ is properly contained in $Yy$, which contradicts the maximality of $Y'$.

The lemma is proved.
\qed
\end{proof}

\begin{lemma}
\label{lemma04}
$X_1 = YY$ is not inner-extendable.
\end{lemma}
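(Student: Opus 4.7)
My plan is to argue by contradiction: assume there exist $Y_1, Y_2 \ne \epsilon$ with $Y = Y_1 Y_2$ and a character $y$ such that $Y_1 y Y_2 Y_1 y Y_2$ is a subsequence of $S$, and extract a violation of maximality from one of the MCS calls inside \cref{Approx1}. As a warm-up I first observe that $Y$ contains exactly $e$ copies of $\sigma$: every MCS call in \cref{Approx1} operates on two substrings of which at least one contains only $e$ copies of $\sigma$, so the output carries at most $e$ of them, and by construction it carries the $\sigma^e$ that is forced through the third argument. Since $S$ has only $\ell \le 2e+1$ copies of $\sigma$ whereas $Y_1 y Y_2 Y_1 y Y_2$ would need $2e+2$ of them if $y = \sigma$, we must have $y \ne \sigma$. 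Writing $j = \numberofocc{Y_1}{\sigma}$, \cref{lemma02} together with $Y_1 \ne \epsilon$ forces $j \ge 1$, and $Y_2$ carries the remaining $e - j$ copies.

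Next I would pin down the $\sigma$-position $i_k$ hosting the first character of the second copy of $Y_1$. Any embedding of $Y_1 y Y_2 Y_1 y Y_2$ into $S$ consumes $2e$ of the $\sigma$-positions $i_1, \ldots, i_\ell$, with the $e$ used by the first $Y_1 y Y_2$ strictly preceding the $e$ used by the second. Since the second $Y_1$ begins with $\sigma$ (\cref{lemma02}), the constraints $k - 1 \ge e$ and $\ell - k + 1 \ge e$ leave $k = e+1$ when $\ell$ is even, and $k \in \{e+1, e+2\}$ when $\ell$ is odd. In every case the embedding splits cleanly across $i_k$: the first $Y_1 y Y_2 \subseteq S\rointv{i_1}{i_k}$ and the second $Y_1 y Y_2 \subseteq S\intv{i_k}{n}$, so $Y_1 y Y_2$ itself is a common subsequence of these two substrings that properly contains $Y = Y_1 Y_2$.

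The closing step matches $k$ against the branch of \cref{Approx1} that produced $Y$. If $\ell$ is even then $k = e+1$ and $Y_1 y Y_2$ directly contradicts the maximality of $Y$ as an MCS of $S\rointv{i_1}{i_{e+1}}$ and $S\intv{i_{e+1}}{n}$. If $\ell$ is odd and $Y$ is returned by the first MCS call (so the test on line~4 fails, i.e., $Y \not\subseteq S\intv{i_{e+2}}{n}$), then $k = e+1$ again contradicts that call's maximality, while $k = e+2$ would force the second $Y_1 y Y_2$, and hence $Y$, into $S\intv{i_{e+2}}{n}$, contradicting the very branch we are in. If $\ell$ is odd and $Y$ is returned by the second MCS call (so the intermediate $Y'$ produced by line~3 satisfies $Y' \subseteq Y$ and is an MCS of $S\rointv{i_1}{i_{e+1}}$ and $S\intv{i_{e+1}}{n}$), then $k = e+2$ contradicts the maximality of $Y$ over $S\rointv{i_1}{i_{e+2}}$ and $S\intv{i_{e+2}}{n}$, whereas $k = e+1$ exhibits $Y_1 y Y_2 \supsetneq Y \supseteq Y'$ as a common subsequence of $S\rointv{i_1}{i_{e+1}}$ and $S\intv{i_{e+1}}{n}$, violating the maximality of $Y'$.

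The main obstacle is the odd case, where both the boundary $\sigma$-position and the algorithmic branch can vary independently: only the correct pairing between the value of $k$ and the MCS call whose output is the final $Y$ picks out the right maximality to violate, and I need to verify that in each of the four combinations I really land on a common subsequence of exactly the pair of substrings supplied to that call.
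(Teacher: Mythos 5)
Your proof is correct and follows essentially the same route as the paper: the paper's own proof of this lemma simply notes $Y_1[1]=\sigma$ via Lemma~\ref{lemma02} and then defers to the parity case analysis of Lemma~\ref{lemma03}, and your argument is exactly that case analysis written out in full (pinning the boundary $\sigma$-position to $i_{e+1}$ or $i_{e+2}$ and contradicting the maximality of the appropriate MCS call, or of the intermediate $Y'$, in each branch).
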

\begin{proof}
Suppose it is not the case and thus there exists a square subsequence $Y_1yY_2 Y_1yY_2$ such that $Y = Y_1Y_2$ and none of $Y_1, Y_2$ is $\epsilon$.
Since $Y_1$ is not $\epsilon$, by \cref{lemma02}, $Y_1[1] = X_1[1] = \sigma$.
Similar to \cref{lemma03}, the lemma can be proved by discussing the parity of $\ell$.
\qed
\end{proof}

By \cref{lemma02}, $X_1[1]=Y[1]=\sigma$ and each copy $Y$ in $X_1$ contains exactly $e$ $\sigma$'s.
Therefore, the first $\sigma$ in $X_1$ must appear in the position $i_1$ when $\ell$ is even 
and in the positions $i_1, i_2$ when $\ell$ is odd.

\begin{definition}
[Leftmost anchor with a position $i_t$]
\label{def02}
Suppose that $Y$ is the sequence produced by \cref{Approx1}.
Given a position $i_t$ of $\sigma$, a {\em leftmost anchor of $Y$ with $i_t$} is the smallest index $j_t$ such that $Y$ is a subsequence of $S[i_t, j_t]$.
\end{definition}

Note that the index $j_t$ can be determined in $O(n)$ by finding the leftmost $Y$ starting with the position $i_t$ for $t\in\{1,2\}$. It might not exist when $t\geq 3$.

\begin{example}
\label{ex02}
Consider the input sequence $S = abcabcaccabcac$ and the character $a$ in \cref{ex01}.
Recall that $Y = abcac$.
The leftmost anchor of $Y$ with the position $i_1 = 1$ and $i_2 = 3$ is $j_1 = 6$ and $j_2 = 8$, respectively.
\end{example}

The following algorithm takes $X_1 = YY$ produced by \cref{Approx1} as its input and 
outputs a maximal square subsequence $X_2 = ZZ$, which is a supersequence of $X_1$.
We first determine the leftmost anchor of $Y$ with the position $i_1$ by \cref{def02} 
where $i_1$ is the first position of $\sigma$ in $S$.
Note that $j_1$ must exist since $Y$ stars with a $\sigma$ and $Y$ is a subsequence of $S$.
Then we compute $Z =MCS(S[1, j_1], S(j_1, n], Y)$.
We remind the readers that in this case, $S[1, j_1]$ starts with the first position of $S$ but in \cref{Approx1}, we always starts with the first position of $\sigma$ in $S$.
The reason is that we hope to contain as many characters that is to the left side of $\sigma$ as possible
to guarantee the final produced sequence is not left-extendable.
Therefore, the current $Z$ may not start with a $\sigma$.
Note that $ZZ$ is a subsequence that contains $YY$.
By Lemmas~\ref{lemma01}, \ref{lemma03} and~\ref{lemma04}, $Z$ is in fact a concatenation of some sequence $Z_1$ and $Y$, i.e., $Z = Z_1 Y$.

We next show the leftmost anchor of $Z$ with the position $i_2$ must exist.
Recall that $Z$ is a subsequence of $S(j_1, n]$ and so is $Y$.
Since $Y$ starts with a $\sigma$ and $j_1 \ge i_1$, $Y$ must be a subsequence of $S[i_2, n]$ and thus $j_2$ exists.

One sees that by \cref{def02}, $j_2 \ge j_1$ and thus $Z$ must be a subsequence of $S[1, j_2]$.
If $\ell$ is an odd number and $Z$ is a subsequence of $S(j_2, n]$, then we update $Z$ by $MCS(S[1, j_2], S(j_2, n], Z)$.
The final produced $X_2 = ZZ$ is the concatenation of two $Z$ copies.
We conclude with the pseudocode of computing $X_2$ in \cref{Approx2} and in \cref{ex03}, we illustrate the algorithm by using the same instance in \cref{ex01} and \cref{ex02}.

\begin{algorithm}
\caption{Algorithm for computing a maximal square subsequence $X_2$}
\label{Approx2}
\begin{algorithmic}[1]
\State Input: The produced sequence $X_1 = YY$ by \cref{Approx1}.

\State Output: A maximal square subsequence $X_2$ that contains $X_1$.

\State Determine the leftmost anchor $j_1$ of $Y$ with the position $i_1$.

\State $Z =MCS(S[1, j_1], S(j_1, n], Y)$.

\If {$\ell$ is odd}

\State Determine the leftmost anchor $j_2$ of $Y$ with the position $i_2$.

\If {$Z$ is a subsequence of $S(j_2, n]$}

\State $Z =MCS(S[1, j_2], S(j_2, n], Z)$.

\EndIf
\EndIf

\State Return $X_2 = ZZ$.
\end{algorithmic}
\end{algorithm}

\begin{example}
\label{ex03}
Consider the input sequence $S = abcabcaccabcac$ in \cref{ex01}.
The output of \cref{Approx1} is $X_1 = YY = abcac \cdot abcac$ where $Y = abcac$ and the character $a$ is picked up by the algorithm.
By \cref{ex02}, $i_1 = 1$ and $j_1 = 6$.
Therefore, the sequence $Z$ in Line 4 of \cref{Approx2} is outputted by $MCS[abcabc, accabcac, abcac]$,
which remains $abcac$.

Note that $j_2 = 8$ and $Z = abcac$ is a subsequence of $S(j_2, n] = cabcac$.
Therefore, the final $Z$ is computed by $MCS[abcabcac, cabcac, abcac]$ and $Z = cabcac$.
Thus, Algorithm \cref{Approx2} gives us a maximal square subsequence $X_2 = ZZ = cabcac \cdot cabcac$.
In this example, we can see that Line 8 is necessary to guarantee $X_2$ is maximal.
\end{example}

\begin{lemma}
\label{lemma05}
The outputted sequence $X_2 = ZZ$ is not left-extendable.
\end{lemma}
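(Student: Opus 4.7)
Proof proposal. The plan is to assume, for contradiction, that $X_2 = ZZ$ is left-extendable---i.e., that there is a character $y$ with $yZyZ \subseteq S$---and to show that this forces $yZ$ to already be a common subsequence of the two substrings of $S$ fed into the MCS call defining the final $Z$. That contradicts the maximality of that MCS output and proves the lemma.

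First I would pin down the structural facts. Since $Z$ contains $Y$ (an invariant preserved by both line~4 and line~8 of Algorithm~\ref{Approx2}) and $Y$ contains $\sigma^e$, together $ZZ$ uses at least $2e$ copies of $\sigma$. Because $S$ has exactly $\ell\in\{2e, 2e+1\}$ such copies, each $Z$ uses precisely $e$ copies, forcing $y\neq\sigma$ (otherwise $yZyZ$ would need $2e+2$). Writing $Z=Z_1Y'$ with $Z_1$ $\sigma$-free and $Y'$ the suffix of $Z$ starting at its first $\sigma$, we have $Y\subseteq Y'\subseteq Z$ all using the same $e$ occurrences of $\sigma$. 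Note that $j_1\le j_2$, since $Y\subseteq S[i_2,j_2]$ already implies $Y\subseteq S[i_1,j_2]$. In any embedding of $yZyZ$ into $S$, an index count shows first-$Z$'s $\sigma$-positions form $\{i_{c_1},\dots,i_{c_e}\}$ with $c_1\in\{1,2\}$; and if $q_1$ denotes the last position used by the first $Z$, the relation $Y\subseteq Y'\subseteq S[i_{c_1},q_1]$ combined with the leftmost-anchor definition yields $q_1\ge j_{c_1}$.

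The argument then splits on the parity of $\ell$ and on whether line~8 fired. \textbf{Case 1 ($\ell$ even):} pigeonhole forces $c_1=1$ and $q_1\ge j_1$. Placing $y$ at $p_1 < i_1\le j_1$, $Z_1$ inside $(p_1,i_1)$, and $Y'$ inside $[i_1,j_1]$ gives $yZ\subseteq S[1,j_1]$; since $p_2 > q_1\ge j_1$ and the second $Z$ sits in $(p_2,n]$, also $yZ\subseteq S(j_1,n]$, contradicting the maximality of the line~4 MCS. \textbf{Case 2 ($\ell$ odd, line~8 did not fire):} I rule out $c_1=2$, since then $q_1\ge j_2$ would force the second $Z$ into $S(j_2,n]$, giving $Z\subseteq S(j_2,n]$ and contradicting the failing condition of line~7. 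So $c_1=1$ and the Case~1 argument applies with $j_1$. \textbf{Case 3 ($\ell$ odd, line~8 fires):} the MCS halves are now $S[1,j_2]$ and $S(j_2,n]$, and when $c_1=2$ the Case~1 argument mirrors with $j_2$ replacing $j_1$ (using $j_1\le j_2$ for the left containment).

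The main obstacle is the remaining subcase (line~8 fires and $c_1=1$), where $p_2 > q_1 \ge j_1$ yields only $p_2 > j_1$ rather than $p_2 > j_2$. To handle it, I would refine the $\sigma$-index analysis: the value $c_e\in\{e,e+1\}$ combined with the triggering condition $Z_\mathrm{old}\subseteq S(j_2,n]$ should either pin $q_1\ge j_2$ directly (when $c_e=e+1$ and $j_2=i_{e+1}$) or let me re-embed the second $yZ$ so that its $y$ lands in $(j_2,n]$ by exploiting $Z\supseteq Z_\mathrm{old}$ and $Z_\mathrm{old}\subseteq S(j_2,n]$. The remaining steps---verifying that each claimed subsequence fits inside its interval---are routine $\sigma$-index and anchor bookkeeping.
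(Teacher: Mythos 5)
Your overall strategy is the same as the paper's: assume $yZyZ\subseteq S$, locate the $\sigma$ of the first copy's $Y$-part at $i_1$ or $i_2$, and turn $yZ$ into a common subsequence of the two halves of some MCS call, contradicting maximality. Your Cases 1--3 line up with the paper's Case 1, Case 2.2 (first branch), and Case 2.1, and your anchor bookkeeping ($q_1\ge j_{c_1}$, hence the second copy lies in $S(j_{c_1},n]$) is exactly the right mechanism. However, the subcase you flag as ``the main obstacle'' ($\ell$ odd, line~8 fired, $c_1=1$) is a genuine gap: you leave it unresolved, and neither of your proposed repairs is sound as stated. Trying to pin $q_1\ge j_2$ cannot work in general, because the embedding of the first copy only certifies $q_1\ge j_1$, and there is no reason the first copy must reach past $j_2$; likewise ``re-embedding the second $yZ$ into $(j_2,n]$'' is precisely the claim you would need to prove, and $Z\supseteq Z_{\mathrm{old}}\subseteq S(j_2,n]$ says nothing about where the extra letter $y$ and the prefix $Z_1$ of the \emph{second} copy can be placed.

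The missing idea is that in this subcase you should not contradict the maximality of the final (line-8) $Z$ at all. Instead, let $Z'$ be the line-4 output, which is a maximal common subsequence of $S[1,j_1]$ and $S(j_1,n]$. Since $Z'\subseteq Z\subsetneq yZ$ and, by your own Case-1 argument with $c_1=1$, $yZ$ is a common subsequence of $S[1,j_1]$ and $S(j_1,n]$, the string $Z'$ is properly contained in another common subsequence of those same two strings --- contradicting the maximality of $Z'$ from line~4. This is exactly how the paper closes the case (its Case 2.2), and it makes the entire parity/line-8 analysis unnecessary for the $c_1=1$ branch: whenever the first copy is anchored at $i_1$, you contradict the line-4 call regardless of whether line~8 fired. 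With that substitution your proof is complete and essentially identical to the paper's.
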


\begin{proof}
Recall that $ZZ$ must be a square subsequence that contains $YY$.
By Lemmas~\ref{lemma01}, \ref{lemma03} and~\ref{lemma04}, $Z$ can be represented in the form $Z_1 Y$ for some sequence $Z_1$.

We assume the lemma does not hold and thus, there exists a character $y$ such that $yZ_1Y yZ_1Y$ is a square subsequence of $S$ where $Z = Z_1Y$.

Case 1: $\ell$ is even.
Note that $Y$ is a subsequence of $S[i_1, j_1]$ and the first copy $yZ_1 Y$ of $yZ_1Y yZ_1Y$ must be a subsequence of $S[1, j_1]$.
It follows that $yZ = yZ_1Y$ is a common subsequence of $S[1, j_1]$ and $S(j_1, n]$, which contradicts the maximality of $Z$.

Case 2: $\ell$ is odd.
In this case, since $Y$ contains $\sigma^e$, the first $\sigma$ of $Y$ in the first copy $yZ_1Y$ in $yZ_1Y yZ_1Y$ must appear in the position $i_1$ or $i_2$.
Therefore, the first copy $yZ_1Y$ must be a common subsequence of $S[1, j_1]$ and $S(j_1, n]$ or;
$S[1, j_2]$ and $S(j_2, n]$.

Case 2.1: $yZ_1Y$ is a common subsequence of $S[1, j_2]$ and $S(j_2, n]$.
Therefore $Z = Z_1Y$ is a subsequence of $S(j_2, n]$.
It follows that $Z$ must be computed in Line 8 of \cref{Approx2}, which is impossible since $Z$ is not a maximal common subsequence of $S[1, j_2]$ and $S(j_2, n]$.

Case 2.2: $yZ_1Y$ is a common subsequence of $S[1, j_1]$ and $S(j_1, n]$.
For ease of presentation, we let $Z'$ be the sequence computed by Line 4 of \cref{Approx2}.
If $Z'$ is not a subsequence of $S(j_2, n]$, then $Z = Z'$.
Otherwise, $Z'$ is a subsequence of $Z$.
In conclusion, $Z'$ is a subsequence of $Z$, which is properly contained in $yZ = yZ_1Y$ and contradicts the maximality of $Z'$.

The lemma is proved.
\qed
\end{proof}

\begin{lemma}
\label{lemma06}
The outputted sequence $X_2 = ZZ$ is not inner-extendable.
\end{lemma}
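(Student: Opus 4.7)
The plan is to adapt the case structure of \cref{lemma04,lemma05} and reduce \cref{lemma06} to the facts those lemmas already establish, together with the MCS-maximality of $Z$ guaranteed by the MCS calls in \cref{Approx2}. Assume for contradiction that $X_2 = ZZ$ is inner-extendable, so there exist a split $Z = Z_a Z_b$ (with both $Z_a, Z_b \neq \epsilon$) and a character $y$ such that $Z_a y Z_b Z_a y Z_b$ is a subsequence of $S$. Since $Z = Z_1 Y$ by the discussion in the proof of \cref{lemma05}, the natural first step is to dispatch on whether the splitting point falls strictly inside $Y$, or lies in $Z_1$ (including the boundary between $Z_1$ and $Y$).

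If the split lies strictly inside $Y$, then $Z_a = Z_1 Y_a$ and $Z_b = Y_b$ with $Y = Y_a Y_b$ and $Y_a, Y_b$ both nonempty. The embedding $Z_1 Y_a y Y_b Z_1 Y_a y Y_b \subseteq S$ then contains, after skipping the two copies of $Z_1$, the subsequence $Y_a y Y_b Y_a y Y_b$; this would inner-extend $X_1 = YY$ and contradict \cref{lemma04}, so this sub-case is handled in one step.

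The substantive sub-case is when $Z_a$ is a prefix of $Z_1$ and $Z_b = W Y$ for the complementary suffix $W$ of $Z_1$. I would first carry out a $\sigma$-counting argument to show that $Z_1$ carries no $\sigma$ (because $YY \subseteq ZZ \subseteq S$ already contributes $2e$ of the at most $2e+1$ copies of $\sigma$ in $S$) and that $y \neq \sigma$ (otherwise $Z_a y Z_b Z_a y Z_b$ would require $2e+2$ copies of $\sigma$). Thus each copy of $Z_a y Z_b$ contains exactly $e$ copies of $\sigma$, all inside its trailing $Y$. The remaining argument mirrors \cref{lemma05}: for even $\ell$, the first copy's $\sigma$'s are forced to occupy the positions $i_1, \ldots, i_e$, so $Y$ in the first copy can be re-embedded into $S[i_1, j_1]$ via the leftmost anchor while the $\sigma$-free prefix $Z_a y W$ fits inside $S[1, i_1 - 1]$; this places the first copy inside $S[1, j_1]$ and the second inside $S(j_1, n]$, making $Z_a y Z_b$ a common subsequence of these two substrings that strictly contains $Z$ and therefore contradicting the MCS-maximality of $Z$ produced in line 4 of \cref{Approx2}. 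For odd $\ell$, the first-copy $\sigma$'s must begin at $i_1$ or $i_2$, and the two subcases correspond exactly to the $j_1$-versus-$j_2$ dichotomy from the proof of \cref{lemma05}: one subcase contradicts the MCS-maximality of $Z$ computed in line 8, while the other contradicts the maximality of the intermediate $Z'$ produced in line 4.

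The step I expect to be the most delicate is the odd-length sub-bookkeeping when the leftmost anchor $j_2$ is involved, specifically checking that the $\sigma$-free prefix $Z_a y W$ still fits inside $S[1, i_2 - 1]$ once the first copy's $Y$ has been re-embedded according to $j_2$. Since $Z_1$ contains no $\sigma$, however, this verification is a direct transfer of the analogous step in \cref{lemma05} and needs no new ideas beyond the $\sigma$-counting argument and the leftmost-anchor definition already exploited there.
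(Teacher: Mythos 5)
Your proposal is correct and follows essentially the same route as the paper: the paper's own proof also reduces the ``split inside $Y$'' case to Lemmas~\ref{lemma03} and~\ref{lemma04} (forcing $Z_2 = Z_3Y$) and then declares the remaining case ``similar to Lemma~\ref{lemma05},'' which is exactly the $j_1$-versus-$j_2$ case analysis you carry out. Your explicit $\sigma$-counting step (showing $Z_1$ and $y$ are $\sigma$-free) is only a minor elaboration of what the paper leaves implicit.
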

\begin{proof}
Suppose not and thus there exists a character $y$ such that $Z_1yZ_2 Z_1yZ_2$ is a square subsequence where $Z = Z_1 Z_2$.
By Lemmas~\ref{lemma03} and~\ref{lemma04}, $Z_2$ must contain $Y$ and in fact, equal to $Z_3 Y$, for some $Z_3$.
In other words, $Z_1yZ_3Y Z_1yZ_3Y$ is a square subsequence and $Z = Z_1Z_3Y$.
The remaining proof is similar to that of \cref{lemma05}.
\qed
\end{proof}

\begin{theorem}
\label{thm01}
Given a sequence $S$ of length $n$, the outputted sequence $X_2 = ZZ$ is a maximal square subsequence of $S$ which can be
computed in $O(n\log n)$ time.
\end{theorem}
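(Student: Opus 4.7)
The plan is to assemble the maximality argument from the lemmas already proved and then bound the running time by carefully accounting for each subroutine call.

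First I would verify that $X_2 = ZZ$ is indeed a square subsequence of $S$. This is immediate from \cref{Approx2}: the variable $Z$ is produced as a common subsequence of a prefix $S[1, j_1]$ (or $S[1, j_2]$) and the complementary suffix of $S$, so two copies of $Z$ can be placed in $S$ without overlap, yielding $ZZ \subseteq S$. Moreover by construction $ZZ$ contains $X_1 = YY$, since $Z$ is computed via $\mathit{MCS}$ constrained to contain $Y$.

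Next I would argue maximality by combining the structural lemmas. \cref{lemma03} already states that \emph{any} square subsequence containing $X_1$ is not right-extendable; since $X_2$ contains $X_1$, it is not right-extendable. \cref{lemma05} gives that $X_2$ is not left-extendable, and \cref{lemma06} gives that $X_2$ is not inner-extendable. By the characterization noted just after \cref{def01}, these three conditions together are equivalent to $X_2$ being a maximal square subsequence of $S$.

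For the running time, I would account for each step. Recording the positions of $\sigma$ and extracting the sublists $\langle i_1,\ldots,i_\ell\rangle$ take $O(n)$. \cref{Approx1} invokes Sakai's maximal-common-subsequence algorithm at most twice on inputs of total length $O(n)$, each call costing $O(n\log n)$ (including the data-structure construction mentioned in the footnote about \cite{Sakai18,Sakai19,DBLP:journals/tcs/HirotaS25}). In \cref{Approx2}, determining each leftmost anchor $j_t$ reduces to a single left-to-right greedy scan matching $Y$ inside $S$ starting at position $i_t$, costing $O(n)$; the algorithm then makes at most two further calls to $\mathit{MCS}$, again at cost $O(n\log n)$ each. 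Summing these costs gives an overall bound of $O(n\log n)$.

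The main obstacle, if any, is not in the correctness argument (which is essentially a bookkeeping combination of Lemmas~\ref{lemma03}--\ref{lemma06}) but in making sure the running-time bound is honest: one must confirm that every invocation of $\mathit{MCS}$ is on inputs whose total length is $O(n)$ and that the construction cost of Sakai's data structure is charged only a constant number of times, so that the $\log n$ factor is not multiplied by anything depending on $\ell$. Once this accounting is done, the stated bound follows and the theorem is proved.
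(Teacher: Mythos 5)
Your proposal is correct and follows essentially the same route as the paper: maximality is obtained by combining $YY \subseteq ZZ$ with Lemmas~\ref{lemma03}, \ref{lemma05} and \ref{lemma06} via the characterization after \cref{def01}, and the running time comes from at most four invocations of Sakai's $O(n\log n)$ algorithm plus linear-time bookkeeping. Your version merely spells out the accounting in more detail than the paper does.
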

\begin{proof}
The theorem is proved by $YY \subseteq ZZ$ and Lemmas~\ref{lemma03},~\ref{lemma05} and~\ref{lemma06}.
The running time of the computation is due to that we call
Sakai's $O(n\log n)$ time algorithm \cite{Sakai18} at most four times.
\qed
\end{proof}

In the next section, we consider the maximal $k$-repeating subsequence problem.

\section{Computing a Maximal $k$-repeating Subsequence}
\subsection{Definitions}

\begin{definition}[alignment]
Given two strings \(A\) and \(S\), an alignment of \(A\) on \(S\) is a one-to-one function \(\phi\) from \(\intv{1}{|A|}\) to \(\intv{1}{|B|}\) such that the order of the indices in \(A\) are preserved by \(\phi\) and the letter at the every index of \(A\) is the same as letter in \(S\) its index gets mapped to.
Note that \(A\) is subsequence of \(S\) if and only if there is some alignment of \(A\) on \(S\).
\end{definition}

\begin{definition}[\(k\)-repeating subsequence]
Let \( S \) be a string and let \( k \in \Z_{>0} \). 
A subsequence \( X \subseteq S \) is called a \emph{\(k\)-repeating subsequence} of \( S \) 
if and only if \( X^k \subseteq S \). 
We denote the set of all \( k \)-repeating subsequences of \( S \) by \( \krep{S} \).
\label{def:krepsubsequence}
\end{definition}

\begin{definition}[maximal-\(k\)-repeating subsequence]
Let \( S \) be a string and let \( X \in \krep{S} \). 
We say that \( X \) is a \emph{maximal \(k\)-repeating subsequence} of \( S \) 
if and only if for every subsequence \( Y \) with \( Y \supset X \), we have \( Y \notin \krep{S}\). We denote the set of all Maximal-\(k\)-Repeating subsequences of \(S\) by \(\maxkrep{S}\).
\label{def:maximalkrepsubsequence}
\end{definition}


For a string \(S\), an index \(i \in \intv{1}{|S|}\), and \(X \subseteq S\), we use \(\nextpt{S}{X}{i}\) to denote the smallest index \(j \geq i\) such that \(S\lointv{i}{j}\) contains \(X\) as a subsequence. 
If \(X \not\subseteq S\lointv{i}{|S|}\), we define \(\nextpt{S}{X}{i}\) to be \textsc{Null}.
Likewise we use \(\prevpt{S}{X}{i}\) to denote the largest index \(l \leq i\) such that \(S\rointv{l}{i}\) contains \(X\) as a subsequence.
If \(X \not\subseteq S\rointv{1}{i}\), we define \(\prevpt{S}{X}{i}\) to be \textsc{Null}.
Intuitively, \(\prevpt{S}{X}{i}\) and \(\nextpt{S}{X}{i}\) adapt~\cref{def02} to both left and right and to work for any index.

\begin{definition}[\sigmastart]
\label{def:sigmastart} Given the string \(S\) and \(r\in \Z_{>0}\), a {\em $\sigma$-start} for \(\sigma^r\) is a $k$-tuple $(p_1, p_2, \ldots, p_k)$ such that the following three conditions are satisfied:

\begin{itemize}
\item[1.] $1 \le p_1 < p_2 < \ldots < p_k \le |S|$ and;

\item[2.] $S[p_1] = S[p_2] = \ldots = S[p_k] = \sigma$ and;

\item[3.] For the ease of presentation, we let $p_{k+1} = |S|+1$. 
Then $\sigma^{r}$ is a subsequence of $S[p_j, p_{j+1}-1]$ for every $j = 1, 2, \ldots, k$.
\end{itemize}
\end{definition}

\begin{definition}[\sigmasplitpoint]
    Given string \(S\), let \(A, B \in \Sigma^{*}\), the \sigmasplitpoint s~for \(A\) and \(B\) are \(p \in \intv{1}{|S|}^k\) where each
    index in \(p\) corresponds a \(\sigma\) in \(S\) and for every \(i \in \intv{1}{k}\), \(\nextpt{S}{B}{p_i}\) and
    \(\prevpt{S}{A}{p_i}\) are not \textsc{Null}
    with \(\nextpt{S}{B}{p_i} < \prevpt{S}{A}{p_{i+1}}\) when \(i \neq k\). 
    \label{def:sigmasplitpoint}
\end{definition}

Note that a \sigmastart~for \(\sigma^r\) is also a \sigmasplitpoint~for \(\epsilon\) and \(\sigma^{r-1}\); conversely, any such \sigmasplitpoint~corresponds to a \sigmastart~for \(\sigma^r\).
We use these definitions along with the following observation
as one of the key basis of our algorithm.
\begin{observation}
     Let \(X \in \Sigma^{*} \) and \(X = A\sigma B\)
     for \(A, B \in \Sigma^{*}\).
     For any \(X' \in \Sigma^{*}\) such that \(X \subseteq X'\) we have \(X' = A'\sigma B'\)  for some \(A \subseteq A'\) and \(B \subseteq B'\).
     \label{obs:prefix-suffix-extend}
\end{observation}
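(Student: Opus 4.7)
The plan is to use the alignment-based characterization of subsequences introduced just before this observation. First I would fix an alignment $\phi : \intv{1}{|X|} \to \intv{1}{|X'|}$ witnessing $X \subseteq X'$. Since $X = A\sigma B$, the distinguished $\sigma$ sits at position $|A|+1$ in $X$. Let $j = \phi(|A|+1)$; by the definition of an alignment, $X'[j] = X[|A|+1] = \sigma$. The natural choice is then $A' = X'\rointv{1}{j}$ and $B' = X'\lointv{j}{|X'|}$, which immediately yields the factorization $X' = A'\sigma B'$ from the observation's conclusion.

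The second step is to verify that $A \subseteq A'$ and $B \subseteq B'$ with essentially the same alignment. Because $\phi$ is order-preserving, $\phi(i) < \phi(|A|+1) = j$ for every $i \le |A|$, so the restriction $\phi|_{\intv{1}{|A|}}$ lands in $\intv{1}{j-1}$; it is still one-to-one, order-preserving, and letter-preserving, which is exactly an alignment of $A$ on $A'$. Symmetrically, $\phi|_{\intv{|A|+2}{|X|}}$ lands in $\intv{j+1}{|X'|}$ after the obvious index shift and witnesses $B \subseteq B'$. Invoking the ``subsequence iff alignment'' equivalence once more closes this step.

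There isn't really a deep obstacle here; the only subtlety is notational bookkeeping at the boundaries, namely when $A = \epsilon$, $B = \epsilon$, $j = 1$, or $j = |X'|$, in which case one (or both) of $A'$, $B'$ is empty. These edge cases are absorbed automatically by the preliminary convention that $S\intv{i}{j} = \epsilon$ whenever $j < i$, so no case analysis is needed. I would include a one-line remark that, because this observation will be applied later with $X$ playing the role of a prescribed factorization of a candidate $k$-repeating subsequence, the factorization of $X'$ constructed above depends on the chosen alignment $\phi$; this is harmless since the statement is existential in $A'$ and $B'$.
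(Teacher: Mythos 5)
Your proof is correct and is exactly the argument the paper intends: the observation is stated there without proof (the surrounding text appeals directly to ``the alignment''), and splitting a fixed alignment $\phi$ of $X$ on $X'$ at the image $j=\phi(|A|+1)$ of the distinguished $\sigma$, then restricting $\phi$ to the two sides, is the natural way to make it precise. The boundary cases are indeed absorbed by the paper's empty-string convention, so nothing is missing.
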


The above observation informs us that if we cannot
insert characters into the prefix \(A\) or the suffix \(B\) to extend \(X\),
then it is impossible to extend \(X\) itself within the alignment.
In other words, any extension of \(X\) must come from 
extending either \(A\) on the left or \(B\) on the right,
while keeping the central \(\sigma\) fixed.

\subsection{Overview}
In this section we provide a overview of our algorithm. However before we do so
we look at a simpler problem to provide intuition for find \(k\)-repeating subsequences.

\subsubsection{Finding one maximal subsequences in a collection}
Similarly to the Maximal Square Subsequence problem,
when searching for \(k\)-repeat subsequences, we use
the subsequence relation as means to
compare and order strings instead of using their length.
To provide intuition for our algorithm, first consider the
problem of finding a single string in a collection \(S=\{A_1, A_2,\ldots, A_k\}\) with no proper supersequences in \(S\) (\eg, a maximal string in \(S\) with respect to the subsequence relation) as opposed to the longest string.
Suppose we start by selecting some string \(A_i\) in \(S\) which we
treat as candidate for being a maximal string in \(S\).
When we attempt to verify if \(A_i\) is maximal, we either find some element in \(S\) that is a proper supersequence of \(A_i\) (thus \(A_i\) is not maximal) or
we cannot and therefore \(A_i\) is maximal.
In the first case, we are able to insert some characters into \(A_i\) to produce a larger (with respect to subsequence relation) string in \(S\) which we can use as a new candidate for a maximal string in \(S\).
Thus, repeating this process allows us to find some additional characters which we can insert into \(A_i\) to produce another string in \(S\), however, as \(S\) only has a limited number of strings, we eventually
must find a supersequence of \(A_i\) in \(S\) that cannot be improved any further by inserting additional characters.
This means a maximal subsequence of \(S\) can always be found by adding characters to an initial guess.
If we, instead, were searching for the longest string in \(S\),
this may not be possible, as every longest string in \(S\) may
require us to delete parts of \(A_i\) to produce it.

For a concrete example, take \(S = \{aba, ababb, aa, c, xyxyay\}\).
If we start with \(aa\), we can produce a maximal string in \(S\) by first jumping to \(aba\) by inserting \(b\) into \(aa\) and then jumping to \(ababb\) by inserting \(bb\) on the end of \(aba\). 
At this point, see that we can no longer improve \(ababb\) and therefore we have found a maximal string.
However, if we are searching for the longest string in \(S\), it is impossible to obtain \(xyxyay\) starting from \(aa\) merely by inserting additional letters; doing so would require deleting one \(a\) from \(aa\).

The example above highlights another quirk related to searching for maximal strings.
If we instead pick \(c\) as the first candidate, we would have immediately found a maximal string in \(S\) despite \(c\) also being the shortest string in \(S\).
Further, unlike \(aa\), where we may need to through two rounds of searching for letters to insert before reaching a maximal solution, here we would have to go through only one.
Taken together, this means (1) when looking for maximal strings in a collection, we can start by selecting any string and look for characters to insert into it until it becomes a maximal string.
(2) The initial candidate can affect how long it takes us to produce a solution. 
Picking a good first candidate could be dramatically faster even if it is small. 

Before presenting the proposed algorithm, we first describe a trivial approach for finding a \(k\)-repeating subsequence of a string \(S\) of length \(n\). 
Suppose we are given a candidate subsequence \(X \subseteq S\) such that \(X\) is a \(k\)-repeating, i.e., \(X^k \subseteq S\). We can always pick any letter \(\sigma \in \Sigma\) that appears at least \(k\) times in \(S\) as the initial candidate. 
Observe that the length of any \(k\)-repeating subsequence is at most \(n/k\). 
For each position, there are at most \(O(n/k)\) choices for the letter to be inserted.  

Each extended candidate can be verified to be a valid \(k\)-repeating subsequence in \(O(n)\) time.
At every position, we iteratively try each choice, inserting the character after the position until no more character can be inserted then we continue to next position.
Note that adding character in later position could never result in new candidates for position already checked.
Since the candidate subsequence can be extended at most \(O(n/k)\) times, the total running time of the trivial algorithm is \(O\big(\frac{n^3}{k^2}\big)\). 

\subsubsection{Algorithm for \(k\)-Repeating Subsequences}
We provide an overview of how our algorithm works to finds a new candidate for a maximal \(k\)-repeating subsequence given a initial candidate \(X\) in \(S\).
At a high level, our algorithm starts by
picking a ``guess separator" letter \(\sigma\) in \(X\) and writing  \(X\) as \(A\sigma B\)
where \(A\) is the prefix in \(X\) that ends just before the first \(\sigma\) and \(B\) is the suffix in \(X\) that ends just after the first \(\sigma\) in \(S\).
We then attempt to iteratively build up a new prefix \(A'\)
that contains \(A\) and suffix \(B'\) that contains
\(B\) by exhaustively scanning all locations for \(\sigma\) in \(S\) where a supersequence \(A''\sigma B''\) with a equal or larger (with respect to subsequence relation) suffix and prefix could be.
Note if we verify we can not extend our suffix or prefix given any location,~\cref{obs:prefix-suffix-extend} implies we have maximal solution. 
At each location in our scan, we apply the algorithm by Hirota and Sakai \cite{HirotaSakai23} to improve our current \(A'\) and \(B'\) and eliminate the possibility that any \(k\)-repeating subsequence \(A''\sigma B''\) can be found at the current locations such that we could increase the size of \(A'\sigma B'\) by inserting characters into our prefix or suffix and taking the produced string as our new candidate.
After our algorithm has exhausted all locations, we know no characters can be inserted before our prefix and suffix to produce a proper supersequence that is also \(k\)-repeating in \(S\); thus, our current string \(A'\sigma B'\) is maximal.

Now fix a string \( X \in \krep{S} \) which is our initial candidate for the algorithm. We can write \( X = A \sigma B \), where 
\( A \in (\Sigma \setminus \{\sigma\})^* \), \( B \in \Sigma^* \), for any \( \sigma \in \Sigma \) that appears in \( X \).
Let \(r=\numberofocc{X}{\sigma}\) be the number of occurrences of \(\sigma\) in \(X\) and \(R=\numberofocc{S}{\sigma} - k\cdot r\).
Since \( X \in \krep{S} \), there must exist an alignment of \( X^k \) as it is a subsequence of \( S \). 
In particular, the first occurrence of \(\sigma\) in each copy of \( X \) must be mapped to some occurrence 
of \(\sigma\) in \( S \) thus, \(R\) is the number of \(\sigma\) in \(S\) not used in an alignment of \(X^k\).
Intuitively, our algorithm proceeds by enumerating and iterating over all potential alignments of these first \(\sigma\) for our initial candidate.
To do this we consider locations for the \(R\) remaining \(\sigma\)
across the \( k \) blocks of \(r\) \(\sigma\) placed consecutively where the start position of the next block is determined by the position of the last block
placed and some number of the remaining \(R\) \(\sigma\) we choose to not use in a block.
Note that, for any potential alignment of the first \(\sigma\), there must exist at least \(r\) number of \(\sigma\) between the starting positions of the first \(\sigma\) in each consecutive block; otherwise, it cannot be a potential starting alignment of the first \(\sigma\).
However, note that, for the potential positions of first \(\sigma\) we consider, there is no guarantee that we can align \(A\) to the left and \(B\) to the right
of each \(\sigma\) at the position we guess such that their placement forms an alignment for \((A\sigma B)^k\).
This is because the only guarantee is that there are at least \(r\) occurrences of \(\sigma\) in \(S\) between the consecutive starting positions of the first \(\sigma\) in \(k\) blocks of \(r\) \(\sigma\), which is a necessary but not sufficient condition for any first \(\sigma\) in each block of a sequence containing \(\sigma^r\) such as \(X\).
Thus we consider a superset of all positions for the first \(\sigma\) in each block for \(X^k\).
Given a potential alignment of the first \(\sigma\), the algorithm attempts to extend the left and the right 
parts of each block (relative to the first \(\sigma\)) in order to form a larger candidate \( k \)-repeating subsequence.
As a result, the main computational work comes from iterating over all potential alignments of the first \(\sigma\) in \( X \). 
Fortunately, we can bound the number of such alignments using a stars-and-bars argument.

Suppose \(Y\in \maxkrep{S}\) such that \( Y \supseteq X \). 
Then the alignment of some \(\sigma\), although maybe not the first, in each block of \( Y \) must also correspond to some starting alignment 
of the first \(\sigma\) in \( X \). 
This means that by systematically considering all possible alignments of the first \(\sigma\) in \( X \) 
and attempting to extend our current candidate with the algorithm by Hirota and Sakai \cite{HirotaSakai23}, our procedure is guaranteed 
to find a maximal \( k \)-repeating subsequence that contains \( X \). 
To calculate the number of such alignments, observe that if \( X \in \krep{S} \) and contains \(r\)
occurrences of \(\sigma\), then aligning \( k \) copies of \( X \) will consume \( k \cdot r \)
occurrences of \(\sigma\) in \( S \) and the remaining number of unused \(\sigma\) is \(R\).
These remaining \( R \) occurrences can be distributed in the gaps between the locations of the first \(\sigma\) as described earlier, 
which is equivalent to partitioning \( R \) indistinguishable items into \( k+1 \) distinguishable bins.
Clearly, there is a bijection between the set of potential alignments of the first \(\sigma\) in each block 
and the set of such partitions.
This gives us a concrete way to bound the number of guesses the algorithm needs to make 
and, consequently, to derive its worst-case running time.

\subsection{Algorithm and Correctness}
We now produce a series of lemmas that
we will need to show our algorithm is correct as well as its runtime.

We start by showing the number of \(\sigma\)-start we need to produce has
an upper bound. In~\cref{app:enumeratingsigstart}, we
provide a detailed algorithm for quickly enumerating \(\sigma\)-starts.

\begin{lemma}
\label{lemma:EnumeratePotentialcorrectness}
Let \(\ell = \numberofocc{S}{\sigma}\). The number of distinct $\sigma$-starts
\(\sigma^r\) is bounded by $\binom{R+k}{k}$ where \(R = \ell-rk\).
\end{lemma}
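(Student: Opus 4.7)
The plan is to translate the definition of a $\sigma$-start into a gap-counting problem and then apply a stars-and-bars argument, as the overview already foreshadows. First I would fix the notation: let $q_1 < q_2 < \cdots < q_\ell$ be the positions of $\sigma$ in $S$. A $\sigma$-start $(p_1,\ldots,p_k)$ then corresponds uniquely to an increasing sequence $1 \le a_1 < a_2 < \cdots < a_k \le \ell$ with $p_j = q_{a_j}$. The content I would verify next is that the third condition of \cref{def:sigmastart} is equivalent to counting inequalities on the $a_j$'s: since $\sigma^r \subseteq S[p_j, p_{j+1}-1]$ requires at least $r$ occurrences of $\sigma$ in that substring, this reads $a_{j+1} - a_j \ge r$ for $1 \le j \le k-1$, and (taking $p_{k+1} = |S|+1$) $\ell - a_k + 1 \ge r$.

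With the inequalities in hand, the key step is a change of variables that converts them into a compositions problem. I would define the gap variables
\begin{align*}
b_0 &= a_1 - 1, \\
b_j &= a_{j+1} - a_j - r \quad \text{for } 1 \le j \le k-1, \\
b_k &= \ell - a_k - r + 1,
\end{align*}
each of which is a nonnegative integer precisely under the constraints above. Intuitively, $b_j$ counts the number of ``unused'' copies of $\sigma$ falling in the $j$-th gap (before $p_1$, strictly between $p_j$ and $p_{j+1}$, or to the right of $p_k$ beyond the $r-1$ needed). A short telescoping calculation then gives $b_0 + b_1 + \cdots + b_k = \ell - kr = R$, and the map from $(a_1,\ldots,a_k)$ to $(b_0,\ldots,b_k)$ is clearly invertible since one can recover each $a_j$ by a prefix sum.

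Finally, once the bijection between $\sigma$-starts for $\sigma^r$ and nonnegative integer solutions of $b_0 + b_1 + \cdots + b_k = R$ is established, I would invoke the standard stars-and-bars formula to conclude that the count is $\binom{R+k}{k}$. I do not expect any real obstacle here: the only care is the boundary bookkeeping, namely that the first and last gaps $b_0$ and $b_k$ use different offsets ($-1$ and $-r+1$ respectively) than the interior gaps ($-r$), because the ``$r$ consecutive $\sigma$'s'' requirement does not cross the boundaries of $S$. Getting those offsets right is what makes the sum collapse to exactly $R = \ell - kr$, which in turn is what yields the stated bound.
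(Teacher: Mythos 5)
Your proof is correct and follows essentially the same stars-and-bars argument as the paper: reduce a $\sigma$-start to nonnegative gap counts of unused $\sigma$'s summing to $R$ and count compositions. Your version is in fact slightly more careful than the paper's --- by including the leading gap $b_0$ as an explicit $(k+1)$-st bin you obtain a genuine bijection (hence equality with $\binom{R+k}{k}$), whereas the paper's proof uses only $k$ bins and asserts $\sum_j x_j = \ell$, which silently ignores occurrences of $\sigma$ before $p_1$ and really only yields $\sum_j x_j \le \ell$.
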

\begin{proof}
Note $\ell = kr + R$;
therefore, $s \in \intv{0}{k-1}$.
Let $x_j$ be the number of $\sigma$'s in the $S\intv{p_j}{p_{j+1}-1})$ for every $j = 1, \ldots, k$.
By \cref{def:sigmastart}, we have $\sum_{j=1}^k x_j = \ell$ and $x_j \ge e$ for every $j$.
In other words, if we set up $x'_j = x_j - e$, then $\sum_{j=1}^k x'_j = \ell - ke = s$ and $x'_j \ge 0$.
It is not hard to see that the number of distinct $\sigma$-starts is at most $\binom{R+k}k$.
\qed
\end{proof}

We now provide lemmas relating to \(\sigma\)-split points that follow immediately from their definitions.
First, we relate \(\sigma\)-split points to \(k\)-repeating subsequences.

\begin{lemma}
    Let \(A,B\in \Sigma^{*}\) and \(\sigma \in \Sigma\).
    Then \(A\sigma B\) is a \(k\)-repeating subsequence if and only if there is exist some \(\sigma\)-split point for \(A\) and \(B\).
    \label{lemma:bicond-krep}
\end{lemma}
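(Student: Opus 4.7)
The plan is to prove the biconditional by unpacking each direction directly from the definitions, with the split-point conditions on \(\nextpt{S}{\cdot}{\cdot}\) and \(\prevpt{S}{\cdot}{\cdot}\) serving as the bridge between abstract subsequence containment and explicit position data in \(S\).

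For the forward direction, I will assume \(A\sigma B\) is a \(k\)-repeating subsequence, i.e., \((A\sigma B)^k \subseteq S\), and fix an alignment \(\phi\) of \((A\sigma B)^k\) on \(S\). Let \(p_i\) be the position in \(S\) to which \(\phi\) maps the central \(\sigma\) of the \(i\)-th copy of \(A\sigma B\). By construction, \(S[p_i] = \sigma\) and the \(p_i\) are strictly increasing. I then need to verify the two split-point conditions. Because \(\phi\) aligns the \(i\)-th \(A\) strictly before \(p_i\), we get \(A \subseteq S\rointv{1}{p_i}\), so \(\prevpt{S}{A}{p_i}\) is not \textsc{Null}; symmetrically, the \(i\)-th \(B\) is aligned strictly after \(p_i\), giving \(B \subseteq S\lointv{p_i}{|S|}\) and so \(\nextpt{S}{B}{p_i}\) is not \textsc{Null}. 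For the ordering condition when \(i<k\), let \(q_i\) be the last position used by the \(i\)-th \(B\) under \(\phi\) and \(s_{i+1}\) the first position used by the \((i{+}1)\)-th \(A\); the alignment enforces \(q_i < s_{i+1}\). Minimality of \(\nextpt{S}{B}{p_i}\) gives \(\nextpt{S}{B}{p_i} \le q_i\), and maximality of \(\prevpt{S}{A}{p_{i+1}}\) gives \(\prevpt{S}{A}{p_{i+1}} \ge s_{i+1}\), yielding \(\nextpt{S}{B}{p_i} < \prevpt{S}{A}{p_{i+1}}\).

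For the backward direction, I assume a \(\sigma\)-split point \((p_1,\ldots,p_k)\) exists and construct an explicit alignment of \((A\sigma B)^k\) on \(S\). Let \(l_i = \prevpt{S}{A}{p_i}\) and \(r_i = \nextpt{S}{B}{p_i}\); both are defined by hypothesis. By the meaning of \(\prevpt{S}{A}{p_i}\), there is an alignment of the \(i\)-th copy of \(A\) into \(S\rointv{l_i}{p_i} = S[l_i,\,p_i-1]\); I map \(\sigma\) to position \(p_i\); and by \(\nextpt{S}{B}{p_i}\) I align the \(i\)-th \(B\) into \(S\lointv{p_i}{r_i} = S[p_i+1,\,r_i]\). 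Concatenating these partial alignments over \(i=1,\ldots,k\) gives a function from \((A\sigma B)^k\) to \(\intv{1}{|S|}\); I just have to confirm it is one-to-one and order-preserving. Within block \(i\) this is immediate from the three disjoint index ranges; between block \(i\) and block \(i+1\) it follows precisely from the split-point inequality \(r_i < l_{i+1}\).

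The main obstacle is purely bookkeeping — carefully matching the half-open interval conventions (\(\rointv{\cdot}{\cdot}\), \(\lointv{\cdot}{\cdot}\)) used in \cref{def:sigmasplitpoint} with the strictly-left/strictly-right structure of an alignment of \(A\sigma B\) around the central \(\sigma\), so that the boundary indices \(p_i\) themselves are neither double-counted nor accidentally excluded. No clever combinatorics are needed: once the two directions are phrased in terms of positions in \(S\), everything follows from the minimality of \(\nextpt{S}{B}{\cdot}\) and maximality of \(\prevpt{S}{A}{\cdot}\) together with \cref{obs:prefix-suffix-extend}-style reasoning about the prefix–center–suffix decomposition.
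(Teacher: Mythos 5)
Your proposal is correct and follows essentially the same route as the paper: the forward direction reads off the split point from the images of the central \(\sigma\)'s under an alignment of \((A\sigma B)^k\) and uses the minimality of \(\nextpt{S}{B}{p_i}\) and maximality of \(\prevpt{S}{A}{p_{i+1}}\) to get the ordering inequality, while the backward direction concatenates the per-block intervals \(\intv{\prevpt{S}{A}{p_i}}{\nextpt{S}{B}{p_i}}\), which the split-point inequality guarantees are disjoint and in order. The only cosmetic difference is that you phrase the ordering step via the first/last aligned positions of the \(i\)-th \(B\) and \((i{+}1)\)-th \(A\) where the paper uses \(\phi(|X^i|)\) and \(\phi(1+|X^i|)\); these coincide (and both silently cover the \(A=\epsilon\) or \(B=\epsilon\) cases the same way).
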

\begin{proof}
    We start with the reverse direction.
    Let \(p\) be some \(\sigma\)-split point for \(A\) and \(B\).
    As  for any \(i \in \intv{1}{k}\) 
    \(\prevpt{S}{A}{p_i}\)
    and
    \(\prevpt{S}{A}{p_i}\)
 are not \textsc{Null} and
    \(S[i]=\sigma\),
    we have \(A\sigma B \subseteq S\intv{\prevpt{S}{A}{p_i}}{\nextpt{S}{B}{p_i}}\).
    Let \(T_1 = S\intv{\prevpt{S}{A}{p_1}}{\nextpt{S}{A}{p_1}}\) and \(T_{i} = T_{i-1} \cdot S\intv{\prevpt{S}{A}{p_i}}{\nextpt{S}{A}{p_i}}\) for \(i \in \intv{2}{k}\).
    As \(\nextpt{S}{B}{p_i} < \prevpt{S}{A}{p_{i+1}}\) when \(i \neq k\), then 
    we have \(T_i \subseteq S\) for any \(i\).
    Thus \((A\sigma B)^k \subseteq T_k \subseteq S\) and \(A \sigma B\) is a  \(k\)-repeating subsequence.
    
    Now we consider the forward direction. As \(A \sigma B\) is a \(k\)-repeating subsequence of \(S\) we have \((A\sigma B)^k \subseteq  S.\)
    thus there some alignment \(\phi\) from \((A\sigma B)^k\) on \(S\).
    Let \(X = A\sigma B\) then \(p = (\phi(|A\sigma|), \phi(|XA\sigma|),\ldots ,\phi(|X^{k-1}A\sigma|))\). By definition of alignment \(S[\phi(|A\sigma|)] = \sigma\) and \(S[\phi(|X^{i}A\sigma|)] = \sigma\) for all \(i \in \intv{2}{k-1}\).
    Observe that \(\phi(1) \leq\prevpt{S}{A}{p_1}\)
    and \(\nextpt{S}{B}{p_k} \leq \phi(|X^k|)\) as well as 
    \(\phi(1+|X^i|) \leq \prevpt{S}{A}{p_{i+1}}\)
    and
    \(\nextpt{S}{B}{p_{i}}\leq \phi(|X^i|) \)
    for \(i \in \intv{1}{k-1}\).
    Thus our prev and next indices are never \textsc{Null}.
    Finally we have \(\nextpt{S}{B}{p_{i}} < \prevpt{S}{A}{p_{i+1}}\) for any \(i \in \intv{1}{k-1}\) as \(\nextpt{S}{B}{p_{i}} \leq \phi(|X^i|) < \phi(1+|X^i|) \leq \prevpt{S}{A}{p_{i+1}}\).
    Thus \(p\) is a \(\sigma\)-split point for \(A\) and \(B\).
\qed
\end{proof}

    The following lemma relates \(\sigma\)-split points between nested \(k\)-related subsequences.

\begin{lemma}
    Let \(A,A' \in \Sigma^{*}\) and \(B,B' \in \Sigma^{*}\) with \(\sigma \in \Sigma\) such that \(A \subseteq A'\) and \(B \subseteq B'\).
    Any \(\sigma\)-split point \(p\) between \(A'\) and \(B'\) is a \(\sigma\)-split point between \(A\) and \(B\).
    \label{lemma:split-implies-split}
\end{lemma}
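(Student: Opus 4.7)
The plan is to verify the three defining conditions of \cref{def:sigmasplitpoint} for the pair \((A, B)\) in turn, given that they already hold for \((A', B')\). The key ingredient is a simple monotonicity property of the \(Prev\) and \(Next\) operators with respect to the subsequence relation, which I would state and prove first.

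The first condition, that each \(p_i\) corresponds to a \(\sigma\) in \(S\), depends only on the indices \(p_i\) and on \(S\); it transfers immediately from \((A', B')\) to \((A, B)\) without any work.

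For the remaining two conditions, I would establish the following monotonicity claim: for any strings \(Y, Z\) with \(Y \subseteq Z\), and any index \(i\), if \(\nextpt{S}{Z}{i}\) is not \textsc{Null} then \(\nextpt{S}{Y}{i}\) is not \textsc{Null} and \(\nextpt{S}{Y}{i} \le \nextpt{S}{Z}{i}\); symmetrically, if \(\prevpt{S}{Z}{i}\) is not \textsc{Null} then \(\prevpt{S}{Y}{i}\) is not \textsc{Null} and \(\prevpt{S}{Y}{i} \ge \prevpt{S}{Z}{i}\). The reason is that any window of \(S\) containing \(Z\) as a subsequence automatically contains \(Y\) as a subsequence (since \(Y\) embeds into \(Z\)), so the extremal window witnessing the \(Next\)/\(Prev\) value for \(Y\) can only be tighter than the one witnessing it for \(Z\). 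Applying this claim to \(A \subseteq A'\) and \(B \subseteq B'\) at each \(p_i\) for \(i \in \intv{1}{k}\) immediately gives the non-\textsc{Null} condition for \((A, B)\).

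For the separation condition, I would simply chain the monotonicity bounds against the hypothesis for \((A', B')\): for every \(i \in \intv{1}{k-1}\),
\[
\nextpt{S}{B}{p_i} \;\le\; \nextpt{S}{B'}{p_i} \;<\; \prevpt{S}{A'}{p_{i+1}} \;\le\; \prevpt{S}{A}{p_{i+1}},
\]
which is exactly the required inequality. There is no real obstacle in this argument; the whole proof is monotonicity bookkeeping, and the only thing to be careful about is ensuring that the \(Next\)/\(Prev\) values are known to be non-\textsc{Null} at each step before comparing them numerically, which is precisely what the monotonicity claim delivers.
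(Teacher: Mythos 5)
Your proposal is correct and follows essentially the same route as the paper's proof: both establish the monotonicity of \(\prevpt{S}{\cdot}{p_i}\) and \(\nextpt{S}{\cdot}{p_i}\) under the subsequence relation (via the observation that any window of \(S\) containing the larger string contains the smaller one) and then chain the resulting bounds through the separation inequality for \((A',B')\). The only difference is presentational --- you isolate the monotonicity as an explicit standalone claim, which is arguably cleaner than the paper's inline derivation via the auxiliary indices \(i'_{prev}\) and \(i'_{next}\).
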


\begin{proof}
    Note that when \(A_i = A\) and \(B_i=B\) this lemma holds true trivially.
    Suppose \(p\) is \sigmasplitpoint~for \(A'\) and \(B'\). 
    By \cref{def:sigmasplitpoint}, we know that 
    \(\prevpt{S}{A'}{p_i}\) and \(\nextpt{S}{B'}{p_i}\) are both not \textsc{Null} for all \( i \in \intv{1}{k} \),
    and moreover, \(\nextpt{S}{B'}{p_i} < \prevpt{S}{A'}{p_{i+1}}\) whenever \( i \neq k \).
    Essentially, for every component \(i\in \intv{1}{k}\) in \(p\), there exists some indices \( i_{prev} \leq p_i \leq i_{next}\) such that \(S\rointv{i_{prev}}{p_i} \supseteq A'\) and \(S\lointv{p_i}{i_{next}} \supseteq B'\). 
    Since \(A' \supseteq A\) and \(B' \supseteq B\), it follows that \(S\rointv{i_{prev}}{p_i} \supseteq A\) and \(S\lointv{p_i}{i_{next}} \supseteq B\). 
    As a consequence for each \(i \in \intv{1}{k}\) there exists \(i'_{prev}\) and \(i'_{next}\) such that \(i_{prev} \leq i'_{prev} \leq p_i \leq i'_{next} \leq i_{next}\) where \(S\rointv{i'_{prev}}{p_i} \supseteq A\) and \(S\lointv{p_i}{i'_{next}} \supseteq B\). 
    From this, it follows directly that \(\nextpt{S}{B}{p_i} \leq \nextpt{S}{B'}{p_i}\) and \(\prevpt{S}{A'}{p_i} \leq \prevpt{S}{A}{p_i}\). 
    Therefore, for all \(i \in \intv{1}{k-1}\) we have,
    \[\nextpt{S}{B}{p_i} \leq \nextpt{S}{B'}{p_i} < \prevpt{S}{A'}{p_{i+1}} \leq \prevpt{S}{A}{p_{i+1}}.\]
    Since the indices \(i'_{prev}\) and \(i'_{next}\) exist for all \(i \in \intv{1}{k}\) and \(\nextpt{S}{B}{p_i} < \prevpt{S}{A}{p_{i+1}}\) when \(i \neq k\), it follows from the \cref{def:sigmasplitpoint} that \(p\) is a \sigmasplitpoint~for \(A\) and \(B\).
\qed
\end{proof}


\begin{algorithm}
\caption{Algorithms for Extending a repeating subsequence}
\begin{algorithmic}[1]
\Require \(X\) a \(k\)-repeating subsequence of \(S\) where \(X\) is not the empty string with character \(\sigma\) appearing in \(X\).
\Function{Extend\(k\)RepSubseq}{$S, X, \sigma$}
        \State \(A \gets X[\ldots i), B \gets X\lointv{i}{}\) where \(i\) is the index of the first \(\sigma\) in X. \label{line:1}
        \State \(P(\sigma) \gets\) the indices of \(S\) where \(\sigma\) occurs.
    \For{$\alpha \in$\textsc{EnumPotential-$\sigma$-Start}$(\numberofocc{X}{\sigma},\,k,\,P(\sigma)\,)$}\label{line:main-loop}
        \If{\(\alpha\) is a \(\sigma\)-split point for \(A, B\)}
            \State \(e \gets (\prevpt{S}{A}{\alpha_2},\;\prevpt{S}{A}{\alpha_3}, \ldots, \prevpt{S}{A}{\alpha_{k-1}},\; |S|+1)\) \label{line:main-block-start}
            \State \(B \gets\)\algoextendcs\((S(\alpha_1\ldots e_1),\ldots,S(\alpha_k\ldots e_k)\;; B)\) \label{line:ExtendB}
            \State \(s \gets (0,\;\nextpt{S}{B}{\alpha_1}, \ldots, \nextpt{S}{B}{\alpha_{k}})\)
            \State \(A \gets\)\algoextendcs\((S(s_1\ldots \alpha_1),\ldots,S(s_k\ldots \alpha_k)\;; A)\) \label{line:EntendA}
        \EndIf
    \EndFor
    \State \Return \(A\sigma B\)
\EndFunction
\end{algorithmic}
    \label{algo:extend-k-repeating}
\end{algorithm}

We now provide~\cref{algo:extend-k-repeating} for extending a nonempty \(k\)-repeating subsequence in \(S\) to a maximal \(k\)-repeating subsequence and then show our main results.
In the psuedocode,
\algoextendcs~refers to
the algorithm by Hirota and Sakai which
we write as \(\algoextendcs(S_1,\ldots, S_k; X)\) to denote that it searches the strings \(S_1\) to \(S_k\) for a maximal common subsequence containing constraint \(X\).
The function \algoenumsigstrt\((r,k,I_{\sigma})\) generates all \(\sigma\)-starts for \(\sigma^r\) in \(S\) given the list of the locations \(I_{\sigma}\) of \(\sigma\) in \(S\) provided in ascending order. It can be found in~\cref{algo:starsandbars} in appendix~\ref{app:enumeratingsigstart}.

\begin{theorem}
    Let \(S\) and \(X\in \krep{S}\) be strings and \(\sigma\) appear in \(X\). 
    \textsc{Extend\(k\)}-
    \textsc{RepSubseq}\((S,\) \(X,\)\(\sigma)\) returns a 
    \(Y \in \maxkrep{S}\) such that 
    \(X \subseteq Y\) in \(O(\binom{R+k}{k}\cdot kn\log n)\) time, where \(R\) is the number of unused \(\sigma\) in \(S\) after placing \(k\) copies of \(X\). 
    \label{thm:kextendcorrectness}
\end{theorem}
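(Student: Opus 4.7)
The plan is to prove, in order: (i) the returned string $A^*\sigma B^*$ contains $X$ and lies in $\krep{S}$; (ii) it is maximal; and (iii) the claimed runtime bound. For (i), both $A$ and $B$ are only modified by an \algoextendcs\ call that takes the current value as the constraint, so $A\subseteq A^*$ and $B\subseteq B^*$ throughout, giving $X=A\sigma B\subseteq A^*\sigma B^*$. The $k$-repeating property is a loop invariant: whenever lines~\ref{line:ExtendB} or~\ref{line:EntendA} execute, the guard guarantees that $\alpha$ is a $\sigma$-split point for the current pair, and by~\cref{lemma:bicond-krep} the updated $A\sigma B$ is in $\krep{S}$. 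One checks that $\alpha$ remains a $\sigma$-split point after the update, since the MCS outputs still fit inside $S(\alpha_i, e_i)$ and $S(s_i, \alpha_i)$ respectively.

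For (ii), suppose for contradiction that $A^*\sigma B^*$ is not maximal, so some $Y\supsetneq A^*\sigma B^*$ lies in $\krep{S}$. By~\cref{obs:prefix-suffix-extend} we may write $Y=A''\sigma B''$ with $A^*\subseteq A''$, $B^*\subseteq B''$, and at least one inclusion strict. By~\cref{lemma:bicond-krep} there is a $\sigma$-split point $p$ for $A''$ and $B''$, and by~\cref{lemma:split-implies-split} $p$ is simultaneously a $\sigma$-split point for every intermediate pair $(A_t,B_t)$ produced by the algorithm and also for $\epsilon,\sigma^{r-1}$; hence $p$ is a $\sigma$-start for $\sigma^r$ enumerated by the outer loop. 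Let $t$ be the iteration that processes $\alpha=p$. The key calculation combines the split inequalities $\nextpt{S}{B''}{\alpha_i}<\prevpt{S}{A''}{\alpha_{i+1}}$ with monotonicity of $\prevpt{S}{\cdot}{i}$ and $\nextpt{S}{\cdot}{i}$ under the subsequence order: because $A_t\subseteq A''$, we have $\prevpt{S}{A_t}{\alpha_{i+1}}\geq \prevpt{S}{A''}{\alpha_{i+1}}$, so $B''$ fits inside every $S(\alpha_i,e_i^t)$ and is therefore a common subsequence of the $k$ strings fed to the \algoextendcs\ call on line~\ref{line:ExtendB} while containing the constraint $B_t$. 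Maximality of the MCS output $B_{t+1}$ then forbids $B''\supsetneq B_{t+1}$, while $B_{t+1}\subseteq B^*\subseteq B''$ forces $B_{t+1}=B^*=B''$. Repeating the argument for line~\ref{line:EntendA} with the updated $B_{t+1}=B^*$ in the definition of $s$ forces $A^*=A''$, contradicting $Y\supsetneq A^*\sigma B^*$.

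For (iii), \cref{lemma:EnumeratePotentialcorrectness} bounds the outer loop by $\binom{R+k}{k}$ iterations. Each iteration tests the split-point condition and computes the $k$-tuples $e$ and $s$ in $O(kn)$ time, then performs two \algoextendcs\ calls on $k$ substrings of total length $O(n)$; by Hirota and Sakai each such call runs in $O(kn\log n)$. Multiplying gives the stated $O\bigl(\binom{R+k}{k}\cdot kn\log n\bigr)$ bound.

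The main obstacle is the symmetric pair of maximality arguments in (ii): we must carefully thread the monotonicity of $\prevpt{S}{\cdot}{\cdot}$ and $\nextpt{S}{\cdot}{\cdot}$ under the subsequence order through the substring endpoints $e_i^t$ and $s_i^{t+1}$ in order to place $B''$ (and then $A''$) inside the exact strings on which \algoextendcs\ was just invoked; once that containment is established, the maximality clause of \algoextendcs\ does the heavy lifting.
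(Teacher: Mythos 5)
Your proposal is correct and follows essentially the same route as the paper's proof: the same loop invariant for containment and the $k$-repeating property, the same maximality argument that locates the $\sigma$-split point of any putative supersequence among the enumerated $\sigma$-starts and then invokes maximality of the \algoextendcs{} outputs at that iteration, and the same runtime accounting via \cref{lemma:EnumeratePotentialcorrectness}. Your handling of the reduction from a split point for $(A_0,B_0)$ to one for $(\epsilon,\sigma^{r-1})$, and hence to a $\sigma$-start, is if anything slightly more explicit than the paper's.
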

\begin{proof}
Let  \(A_0, B_0 \in \Sigma^{*}\) such that \(X=A_0\sigma B_0\).
First note that the string \(A\) and \(B\) are supersequences of \(A_0\) and \(B_0\) after \cref{line:1} executes and \(A\sigma B\) is a \(k\)-repeating subsequence.
Further after each iteration of the loop on \cref{line:main-loop}, \(A\) and \(B\)
must still be supersequences as \algoextendcs
~can only produce supersequences of its constraints.
Now note that \cref{line:ExtendB} and \cref{line:EntendA} always call \algoextendcs~on
non overlapping substrings that contain a \(\sigma\) between them, thus \((A\sigma B)^k\) is contained
in the concatenation of these substrings hence is a subsequence of \(S\). Thus, \(A \sigma B\) is a \(k\)-repeating subsequence that contains \(X\) after every iteration of~\cref{line:main-loop} therefore,
the algorithm outputs a \(k\)-repeating subsequence that
contains \(X\).

Now, we will show the \(A\sigma B\) returned by the \algoextendkrep~is maximal.
Let \(Y \in \krep{S}\) such that \(A\sigma B \subseteq Y\).
By~\cref{obs:prefix-suffix-extend}, \(Y = A'\sigma B'\) where \(A \subseteq A'\) and \(B \subseteq B'\) for some strings \(A',B'\).
Further by~\cref{lemma:bicond-krep}, we have that there must exists some \(\sigma\)-split point \(p\) between \(A'\) and \(B'\).
As \(A_0 \subseteq A \subseteq A'\) and \(B_0 \subseteq B \subseteq B'\), we have \(p\) is also \(\sigma\)-split point for \(A_0\) and \(B_0\) by \cref{lemma:split-implies-split}, and \(p\) is
a \(\sigma\)-start for~\(\sigma^{\numberofocc{X}{\sigma}}\).
As \algoenumsigstrt(\(\numberofocc{X}{\sigma},\,k,\,P(\sigma)\,)\)~generates all \(\sigma\)-starts for~\(\sigma^{\numberofocc{X}{\sigma}}\)
, consider the iteration of the loop on~\cref{line:main-loop} where \(\alpha = p\).
Let \(A_p\) and \(B_p\) be values of \(A\) and \(B\) in the algorithm
before this iteration is completed and let \(A_p'\) and \(B_p'\) be their values after.
As \(B_p \subseteq B\) and \(A_p \subseteq A\), we have \(p\)
is a \(\sigma\)-split for \(A_p\) and \(B_p\) by \cref{lemma:split-implies-split} thus, \cref{line:main-block-start} to \cref{line:EntendA}
are executed this iteration.
As \(A_p \subseteq A\), we have
\(\prevpt{S}{A'}{p_i} \leq \prevpt{S}{A_p}{p_i}\) for 
every \(i \in \intv{1}{k}\).
As \(e = (\prevpt{S}{A_p}{p_2}, \ldots,\;\prevpt{S}{A_p}{p_{k-1}},\;|S|+1)\) after~\cref{line:ExtendB},
we have \(B' \subseteq S\ointv{p_i}{\prevpt{S}{A'}{p_i}} \subseteq S\ointv{p_i}{e_i}\) for \(i \in \intv{1}{k-1}
\) and \(B'\subseteq S\ointv{p_k}{|S|+1} = S\ointv{\alpha_k}{e_k}\).
Therefore \(B'\) is a common subsequence to the input of
\algoextendcs~on~\cref{line:ExtendB} which contains the contains the constraint \(B_p\).
After~\cref{line:ExtendB} is executed, the variable \(B\) is assigned to \(B_p'\) which must be
a maximal common subsequence \(S\ointv{\alpha_i}{e_i}\) for \(i \in \intv{1}{k}\).
As \(B_p' \subseteq B \subseteq B'\), by maximality, \(B_p' = B'\) containing \(B = B'\).
Showing \(A = A_p' = A'\) follows in a very similar fashion.
Therefore \(A\sigma B = A'\sigma B'= Y\).
Thus, we have the only supersequence of \(A\sigma B\) which is also a \(k\)-repeating subsequence of \(S\) is \(A\sigma B\), so \(A\sigma B\) is maximal.

Note that~\algoextendkrep~spends no more than \(O(n)\) time before it begins iterating the loop on~\cref{line:main-loop}.
Further, the time spent per iteration of the loop is dominated by the time it takes to extend \(A\) and \(B\), thus taking no more than  \(O(kn\log n)\).
Since there are no more than \(\binom{\numberofocc{S}{\sigma} - k \cdot r+k}{k}\) iterations of the loop by~\cref{lemma:EnumeratePotentialcorrectness}.
Consequently, the algorithm
runs in \(O(\binom{R+k}{k}\cdot kn\log n)\) time, where \(R = \numberofocc{S}{\sigma} - k r\). 
\qed
\end{proof}

\begin{corollary}
Given a sequence \(S\), one maximal \(k\)-repeating subsequence can be found in \(O(k\binom{2k-1}{k}\cdot n\log n)\) time.
    \label{cor:correctnessoffindone}
\end{corollary}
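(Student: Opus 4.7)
The plan is to apply \cref{thm:kextendcorrectness} with a carefully chosen initial candidate $X$ that forces the parameter $R$ appearing in the time bound to be at most $k-1$. If no letter of $\Sigma$ occurs at least $k$ times in $S$, then $\epsilon$ is the unique $k$-repeating subsequence of $S$ and is vacuously maximal, so I would return it in $O(n)$ time after a single pass that counts letter frequencies.

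Otherwise, I would pick any $\sigma \in \Sigma$ with $\ell := \numberofocc{S}{\sigma} \geq k$, set $r := \lfloor \ell/k \rfloor$, and take $X := \sigma^r$ as the initial candidate. Since $kr \leq \ell$, we have $X^k = \sigma^{kr} \subseteq S$, so $X \in \krep{S}$ and the hypotheses of \cref{thm:kextendcorrectness} are met. Invoking \algoextendkrep~then returns a maximal $k$-repeating subsequence $Y \supseteq X$, which is exactly what the corollary asks for.

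The only subtlety is the choice of $r$: setting $r = 1$ (i.e., $X = \sigma$) would leave $R = \ell - k$ unused $\sigma$'s, which could be $\Theta(n)$ and ruin the bound. Taking $r = \lfloor \ell/k \rfloor$ instead makes $R = \ell - kr = \ell \bmod k \leq k-1$, so $R + k \leq 2k-1$. Since $\binom{\cdot}{k}$ is monotone in its upper argument, $\binom{R+k}{k} \leq \binom{2k-1}{k}$, and substituting into the bound of \cref{thm:kextendcorrectness} yields the claimed $O(k\binom{2k-1}{k}\cdot n\log n)$ runtime. I do not anticipate any real obstacle; the corollary is essentially a direct instantiation of the theorem with a well-chosen seed, and the main content is the observation that padding the initial guess to $\sigma^{\lfloor \ell/k \rfloor}$ is what bounds $R$ independently of $n$.
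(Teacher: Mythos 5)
Your proposal matches the paper's proof essentially verbatim: both handle the degenerate case by returning $\epsilon$, seed \algoextendkrep~with $X=\sigma^{\lfloor \ell/k\rfloor}$ for a letter $\sigma$ occurring $\ell\geq k$ times, and observe that $R=\ell-k\lfloor\ell/k\rfloor\leq k-1$ gives the bound $\binom{R+k}{k}\leq\binom{2k-1}{k}$. The argument is correct and the approach is the same.
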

\begin{proof}
If \(S\) does not contain any letter repeated \(k\) times, we can detect this and return \( \epsilon \) in \( O(n + |\Sigma|) = O(n) \) time.
Otherwise, let \(\sigma\) be any character that appears \( \ell \geq k \) times in \( S \). 
Then we select the subsequence \(X=\sigma^{\lfloor \frac{\ell}{k}\rfloor}\). Observe that \(X \in \krep{S}\) holds trivially. 
Then we call \algoextendkrep~to extend \(X\) and return a maximal \(k\)-repeating subsequence in \(O(k\cdot\binom{\ell-k\cdot \lfloor \ell/k\rfloor + k}{k}\cdot n\log n)\) time as \(\ell-k\cdot \lfloor \ell/k\rfloor \leq k-1\). Thus, we have a running time of \(O(k\cdot \binom{2k-1}{k}\cdot n\log n)\).
    \label{proof:correctnessoffindone}
    \qed
\end{proof}

Note that if we define $f(k)=k\cdot \binom{2k-1}{k}$, then the above time bound becomes $O(f(k)n\log n)$. Moreover, when $k=O(1)$ a maximal \(k\)-repeating subsequence of $S$ can be computed in $O(n\log n)$ time. 



\section{Concluding Remarks}

We give efficient algorithms to compute the maximal square subsequence (resp. maximal $k$-repeating subsequence) of an input string $S$ of length $n$, with running time $O(n\log n)$ (resp. $O(f(k)n\log n)$). These are much faster than the `maximization' versions, with running time $O(n^2)$ and $O(n^{2k-1})$ respectively. 

However, for the constrained versions where a pattern $X$ of $S$ must not appear in a maximal square subsequence (resp. maximal $k$-repeating subsequence), the problems are completely open.

\section*{Acknowledgments}

This research is partially supported by NSF under project CNS-2243010 and CCF-2529539.

\newpage

\appendix

\section{Enumerating \(\sigma\)-Start}
\label{app:enumeratingsigstart}
We consider in detail an algorithm to produce \(\sigma\)-starts in this section.
We use \(\mathbf{yeild}\) in the pseudocode to denote that the algorithm provides an output but does not necessarily terminate; instead, it pauses and waits for a request for the next item to continue running.
The algorithm terminates when there are no more items to provide.
First, note \textsc{EnumPossibleDivisions} yields
every element of \(\Delta_{d;h} = \{ \Bar{x} \in \mathbb{Z}^{d+1}_{\geq 0} \colon \sum^{d+1}_{i=1}x_i = h\}\) exactly
once spending no more \(O(d)\) time per item yielded.
This function is intuitively to equivalent generating all the ways to partition \(h\) indistinguishable elements into \(d+1\) distinguishable bins.

\begin{algorithm}
\caption{Algorithms for Enumerating the division of \(k\) objects between \(d\) dividers}
\label{algo:starsandbars}
\begin{algorithmic}[1]
\Require \(k\) and \(d\) are positive integers.
\Function{EnumPossibleDivisions}{$k, d$}
  \If{$d\le0$}
    \State \textbf{yield} $k$
  \EndIf
  \For{$i \gets 0$ \textbf{to} $k$}
    \For{$sol \in$\textsc{EnumPossibleDivisions}$(k-i, d-1)$}
        \State \textbf{yield} $[\,i\,] + sol$
    \EndFor
  \EndFor
\EndFunction
\Require \(k\) is the number of repetitions \(X\) in \(S\) and \(r\) is the number of \(\sigma\).
\(I_{\sigma}\) is the list of indices of every \(\sigma\) in \(S\) in ascending order. 
 \Function{EnumPotential-$\sigma$-Starts}{$r, k, I_\sigma$}
    \For{$sol \in$\textsc{ EnumPossibleDivisions}(\(|I_\sigma|-k\cdot r\), \(k\))}
        \State \textbf{yield} \((I_{\sigma}[(t-1)r +1 +\sum_{i=1}^{t}sol[i]] \text{ for }t\text{ from } 1\text{ to }k-1)\) \label{line:yieldline9}
    \EndFor
\EndFunction
\end{algorithmic}
\end{algorithm}

\begin{lemma}
    The algorithm \textsc{EnumeratePotential-\(\sigma\)-Starts}
    yields every \(\sigma\)-start for \(\sigma^r \subseteq S\) where \(r \in \Z_{>0}\)
    exactly once and spends a total of
    \(O \left( k\cdot \binom{\numberofocc{S}{\sigma} - k \cdot r+k}{k}\right)\) time to produce every value.\label{lemma:EnumeratePotentialcorrectness2}
\end{lemma}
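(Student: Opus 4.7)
The plan is to prove the lemma in two parts — correctness (each \(\sigma\)-start for \(\sigma^r\) is yielded exactly once) and running time — with the key tool being an explicit bijection between \(\sigma\)-starts for \(\sigma^r\) and tuples in \(\Delta_{k;R} = \{\bar{x} \in \Z_{\geq 0}^{k+1} : \sum_{i=1}^{k+1} x_i = R\}\), where \(R = \numberofocc{S}{\sigma} - kr\).

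For the bijection, given a \(\sigma\)-start \((p_1, \ldots, p_k)\) with the convention \(p_{k+1} := |S|+1\), I would set \(x_1\) to be the number of \(\sigma\)'s in \(S[1, p_1-1]\) and \(x_{t+1} := \numberofocc{S[p_t, p_{t+1}-1]}{\sigma} - r\) for \(t \in \intv{1}{k}\). Condition~3 of~\cref{def:sigmastart} forces each \(x_{t+1} \geq 0\), and summing the total \(\sigma\)-count in \(S\) gives \(\sum_{i=1}^{k+1} x_i = R\). Conversely, from any \(sol \in \Delta_{k;R}\) the formula on~\cref{line:yieldline9}, \(p_t = I_\sigma[(t-1)r + 1 + \sum_{i=1}^{t} sol[i]]\), produces a valid \(\sigma\)-start: the index lies in \(\intv{1}{|I_\sigma|}\) since \(\sum_{i=1}^{k} sol[i] \leq R\); the \(p_t\) are strictly increasing because consecutive indices in \(I_\sigma\) differ by \(r + sol[t+1] \geq 1\); and the block \(S[p_t, p_{t+1}-1]\) contains exactly \(r + sol[t+1] \geq r\) occurrences of \(\sigma\). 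These two maps are inverses by construction. I would then verify by induction on \(d\) that \textsc{EnumPossibleDivisions}\((k, d)\) yields each element of \(\Delta_{d;k}\) exactly once (base case \(d = 0\) yields \((k)\); inductive step: the outer loop on \(i\) partitions \(\Delta_{d;k}\) by first coordinate and the recursive call on \((k-i, d-1)\) enumerates each resulting class). Composing with the bijection, \textsc{EnumPotential-}\(\sigma\)\textsc{-Starts} yields every \(\sigma\)-start for \(\sigma^r\) exactly once, with \(|\Delta_{k;R}| = \binom{R+k}{k}\) total outputs by standard stars and bars.

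For the running time, the recursion tree of \textsc{EnumPossibleDivisions} has depth \(k+1\); by an amortized analysis that maintains \(sol\) as a stack with one push per recursive call and one pop on return, the work attributed to each leaf is \(O(k)\). On~\cref{line:yieldline9}, forming the output tuple takes \(O(k)\) time provided the running cumulative sum \(\sum_{i=1}^{t} sol[i]\) is updated incrementally in \(t\). Multiplying by the \(\binom{R+k}{k}\) leaves gives the claimed \(O(k\binom{R+k}{k})\) total. The main obstacle I anticipate is the reverse direction of the bijection — verifying that the positions generated by the yield formula satisfy all three conditions of~\cref{def:sigmastart}, particularly condition~3 for the rightmost block \(S[p_k, |S|]\), where I must use the convention \(p_{k+1} = |S|+1\) together with the fact that the last bin \(sol[k+1]\) absorbs the remaining unused \(\sigma\)'s so that \(r + sol[k+1] \geq r\) \(\sigma\)'s lie in that final block.
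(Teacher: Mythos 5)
Your proposal is correct and follows essentially the same route as the paper: a stars-and-bars correspondence between $\sigma$-starts for $\sigma^r$ and the $\binom{R+k}{k}$ compositions of $R=\numberofocc{S}{\sigma}-kr$ into $k+1$ nonnegative parts, with $O(k)$ work per yielded tuple. You are in fact somewhat more explicit than the paper, which asserts rather than proves that \textsc{EnumPossibleDivisions} enumerates $\Delta_{k;R}$ exactly once and phrases the injectivity half of the bijection via alignments of $\sigma^{rk}$ instead of directly via the gap counts $x_{t+1}=\numberofocc{S[p_t,\,p_{t+1}-1]}{\sigma}-r$.
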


\begin{proof}
Let \( R = \numberofocc{S}{\sigma} - k \cdot r \) which is equal to the number of unused \(\sigma\) after we place \(k\) copies of \(\sigma^r\) in \(S\).  
Consider \( \boldsymbol{\delta} \in \intcompositons{k}{R} \) and define \( w \in \mathbb{Z}_{\geq 0}^{k} \) by:
\[
w_1 = \delta_1,
\quad \text{and} \quad
w_{i+1} = w_i + \delta_{i+1} + r 
\quad \text{for } i \in \lointv{1}{k}.
\]
Observe that:
\[
w_k = \sum_{i=1}^{k} \delta_i + r(k-1) + 1.
\]
Hence:
\[
\numberofocc{X}{\sigma} - w_k \geq r - 1.
\]
This shows that at the \( w_k \)-th occurrence of \(\sigma\) in \( S \), there are at least \( r - 1 \) additional occurrences remaining.

For each \( i \in \rointv{1}{k} \), we can select the \( w_i \)-th occurrence of \(\sigma\) in \( S \) and build an alignment of \(\sigma^r\).  
Since \( w_{i+1} - w_i \geq r \), the block of \(\sigma^r\) starting at the \( w_i \)-th occurrence will end strictly before the \( w_{i+1} \)-th occurrence.

Moreover, note that:
\[
w_i = \sum_{j=1}^{i} \delta_j + r(i-1) + 1,
\]
which is exactly the expression appearing at \cref{line:yieldline9} in \cref{algo:starsandbars}.  
Therefore, the procedure \textsc{EnumPotential-\(\sigma\)-Starts} always produces a valid \(\sigma\)-start for \(\sigma^r\).

Additionally, for every alignment \(\psi\) of \(\sigma^{rk}\), we can construct the \((k+1)\)-tuple \(q \in \Z^{k+1}_{\geq 0}\), where \(q_1 = |\{i \in H_{\sigma}:i < \psi(1)\}|\), \(q_{k+1} = |\{i \in H_{\sigma} : \psi(1+r(k-1)) < i\}|)\), and \(q_i = |\{{j \in H_{\sigma} : \psi(1 + r(i-2))} < j < \psi(1 + r(i-1))|\}\)
where \( H_{\sigma} \) is the set of indices of \(\sigma\) in \( S \) that are unused by alignment \(\psi\).
It is clear that \( q \in \intcompositons{k}{R} \), and moreover, the location of the first \(\sigma\) in each \(\sigma^r\) block in \(\psi\) can be uniquely recovered from \( q \). 
Consequently, each \(\sigma\)-start is generated exactly once.

Note that \textsc{EnumPotential-\(\sigma\)-Starts} 
generates exactly one unique output per element of \(\intcompositons{k}{R}\), with a per-element processing time of \(O(k)\).
As \(|\intcompositons{k}{R}|= \binom{R+k}{k}\),
\textsc{EnumPotential-\(\sigma\)-Starts} spends \(O\left(k\cdot \binom{R + k}{k}\right)\) time to generate every starting point.
\qed
\end{proof}

As a consequence of~\cref{lemma:split-implies-split} and the prior lemma, we have the following:

\begin{corollary}
    The algorithm \textsc{EnumeratePotential-\(\sigma\)-Starts}
    yields every \(\sigma\)-start for \(\sigma^{\numberofocc{X}{\sigma}}\) for the \(k\)-repeating subsequence \(X \subseteq S\).
    \label{lemma:EnumeratePotentialcorrectnessX}
\end{corollary}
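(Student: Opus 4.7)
The plan is to specialize Lemma~\ref{lemma:EnumeratePotentialcorrectness2} to $r = \numberofocc{X}{\sigma}$ and verify that its hypotheses hold in the present setting. First I would observe that $r \geq 1$ since $\sigma$ is assumed to appear in $X$, so $r \in \Z_{>0}$. Next I would check that $\sigma^{r} \subseteq S$: because $X \in \krep{S}$ we have $X^{k} \subseteq S$, and comparing occurrences of $\sigma$ on both sides yields $\numberofocc{S}{\sigma} \geq k\cdot r \geq r$, so $\sigma^{r}$ is trivially a subsequence of $S$. Both hypotheses of the preceding lemma are therefore met, and it immediately gives that \textsc{EnumPotential-$\sigma$-Starts}$(r,\,k,\,I_{\sigma})$ produces every \sigmastart~for~$\sigma^{r}$ exactly once, which is precisely the claim.

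Although \cref{lemma:split-implies-split} is not strictly necessary for the bare statement, it is cited because it explains why $\sigma$-starts for $\sigma^{r}$ are the right objects to enumerate for the algorithm's correctness. Decomposing $X = A\sigma B$ so that $A$ contains no $\sigma$ (whence $\epsilon \subseteq A$) and $B$ contains exactly $r-1$ copies of $\sigma$ (whence $\sigma^{r-1} \subseteq B$), any $Y = A'\sigma B' \in \krep{S}$ with $X \subseteq Y$ gives a \sigmasplitpoint~$p$ for $(A',B')$ by \cref{lemma:bicond-krep}. \cref{lemma:split-implies-split} then transfers $p$ to a \sigmasplitpoint~for $(\epsilon, \sigma^{r-1})$, which by the equivalence stated just before that lemma is exactly a \sigmastart~for $\sigma^{r}$. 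Hence every alignment that could witness a maximal extension of $X$ appears among the tuples produced by the enumeration.

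The only nontrivial check is the occurrence-count inequality $\numberofocc{S}{\sigma} \geq k\cdot r$, and this is immediate from $X^k \subseteq S$. After that, the corollary follows by a direct appeal to Lemma~\ref{lemma:EnumeratePotentialcorrectness2} together with the \sigmastart/\sigmasplitpoint~correspondence, so I do not anticipate any serious obstacle.
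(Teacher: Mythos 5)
Your proposal is correct and matches the paper's intent: the paper proves this corollary simply by citing Lemma~\ref{lemma:EnumeratePotentialcorrectness2} (specialized to $r=\numberofocc{X}{\sigma}$, which is well-defined since $X^k\subseteq S$ gives $\numberofocc{S}{\sigma}\ge k\cdot r$) together with Lemma~\ref{lemma:split-implies-split}, which is exactly the argument you spell out, including the \sigmastart/\sigmasplitpoint~correspondence for $(\epsilon,\sigma^{r-1})$ that the paper also uses inside Theorem~\ref{thm:kextendcorrectness}. No gaps; your write-up is just a more explicit version of the paper's one-line justification.
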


\end{document}